\newcommand{\qedhere}{\nobreak\ifvmode\relax\else%
        \ifdim\lastskip<1.5em \hskip-\lastskip%
        \hskip 1.5em plus0em minus0.5em%
        \fi\nobreak\strut\hfill\ensuremath{\square}%
        %\vrule height0.75em width0.5em depth0.25em%
        \fi}
\newenvironment{proof}[1][\bfseries{Proof.}]{\begin{trivlist}
  \item[\hskip \labelsep {\bfseries{#1}}]}{\qedhere\end{trivlist}}
\title{%
   A Type System for Unstructured Locking
   that Guarantees Deadlock Freedom
   without Imposing a Lock Ordering
}
\newif\ifeptcstr
\newif\ifeptcs
\newif\ifplaces
\newif\ifnotplaces
\newif\ifllncs
\newif\iftypesafety
\newif\ifplacestr
\newcommand\cfont[1]{\ensuremath{\mathtt{#1}}}
\newenvironment{nruledisplay}{%
  \begin{center}%
  \(%
  \begin{array}[b]{@{}c@{}}%
}{%
  \end{array}%
  \)%
  \end{center}%
}
\def\nrule{\@ifnextchar[{\nrule@label}{\nrule@simple}}
\def\nrulelabel#1{\mbox{\textsl{#1}}}
\def\nrule@label[#1]#2#3{\ensuremath{%
  \begin{array}[t]{@{}l@{}}%
  \inference{#2}{#3}[\text{\textsl{(#1)}}]%
  \end{array}%
}}
\def\nrule@simple#1#2{\ensuremath{%
  \begin{array}[t]{@{}l@{}}%
  \inference{#1}{#2}%
  \end{array}%
}}
   \newif\ifoperational
   \newif\ifnotplaces
      \newcommand{\rline}{\ensuremath{\\[1.5em]}\footnotesize}
      \newcommand{\rspace}{\ensuremath{\hspace{2em}}}
      \newcommand{\mfun}[2]{\mathsf{#1}(#2)}
      \newcommand{\loc} {{\rho}}
      \newcommand{\cstd}{}
      \newcommand{\rline}{\\[2em]\footnotesize}
      \newcommand{\rspace}{\ensuremath{\hspace{2em}}}
      \newcommand{\mfun}[2]{\mathsf{#1}(#2)}
      \newcommand{\loc} {{r}}
      \newcommand{\cstd}{\ensuremath{M;}}
\newcommand{\mycomment}[1]{}
   \newcommand{\jjmath}{\ensuremath{\cfont{j}}}
\newcommand\highlight[2]{{\setlength\fboxsep{1pt}\colorbox{#1}{#2}}}
\def\PG{\highlight{colorPG}}
\definecolor{colorPG}{rgb}{1.0,0.8,0.8}
\definecolor{colorNSP}{rgb}{0.8,1.0,0.8}
\definecolor{colorKS}{rgb}{0.8,0.8,1.0}
   \newlength{\gmylen}
   \newdimen\argwidth
      \def\[[#1\]]{%
         \setbox0=\hbox{$#1$}\argwidth=\wd0
            \setbox0=\hbox{$\left[\box0\right]$}\advance\argwidth by -\wd0
            \left[\kern.3\argwidth\box0\kern.3\argwidth\right]}
      \renewcommand\ngrammar@plain@rule@symbol{::=}%
      \renewcommand\ngrammar@plain@nst[1]{%
         \ngrammar@space%
            \ensuremath{\cfont{#1}}%
            \ngrammar@space@neededtrue}%
            \renewcommand\ngrammar@plain@nsnt[1]{%
               \ngrammar@space%
                  \ensuremath{\mathit{#1}}%
                  \ngrammar@space@neededtrue}%
                  \newcommand\nscomment[1]{%
                     \ifngrammar@long%
                        \iftypesafety
                        \nabstab{-\ngrammar@sem@width}%
                        \else
                        \ntab
                        \fi
                        \textit{#1}%
                     \else%
                        % if you remove \ensuremath, you'd better know what you're doing!
                          \ensuremath{& \textit{#1}}%
                     \fi%
                  }
   \newcommand{\gedef}[3]{
      \global\advance\ngrammar@numrulescount 1\relax%
         \ngrammar@numrule{\the\ngrammar@numrulescount}%
         \textbf{#1} \nabstab{\gmylen}
      \ifngrammar@long%
         \ndefinebegin{\nsnt{#2}}{\ngrammar@rule@symbol}%
         \else%
         & \nsnt{#2} & \ngrammar@rule@symbol &
         \fi%
         \ngrammar@activetrue%
         \ngrammar@space@neededfalse%
#3%
         \ifngrammar@long%
         \ndefineend%
         \fi%
         \ngrammar@activefalse%
         \\[\ngrammarruleskip]%
   }
   \newcommand{\fspan}[3]{
      \mfun{frame\_lockset}{#1,#2,\, #3}
   }
   \newcommand{\fmax}[1]{
      \mfun{max}{#1}
   }
   \newcommand{\fmin}[1]{
      \mfun{min}{#1} 
   }
   \newcommand{\fsub}[4]{
      #1 \,+_{#2} \, (#3,#4)
   }
   \newcommand{\fsplit}[4]{
     (#1,#2) =  \mfun{split}{#3 \, , \ #4} 
   }
   \newcommand{\flockset}[3]{
      \mfun{lockset}{#1,#2,#3}
   }
         \newtheorem{lemma}{Lemma}
         \newtheorem{theorem}{Theorem}
      \newtheorem{defn}{Definition}
      \newtheorem{theorema}{Theorem}
      \newtheorem{lemmaa}{Lemma}
   \newcommand{\gbox}[3]{%
            \global\setlength{\gmylen}{#1}
            \footnotesize
               \begin{ngrammar}[noindent,semwidth=#2,long]
#3
               \end{ngrammar}
            \vspace*{-\baselineskip}
            \vspace*{-\ngrammarruleskip}
            \footnotesize
     }
         \newcommand{\opbox}{\ensuremath{
               \Box
            }}
         \newcommand{\aequiv}{\ensuremath{\equiv}}
         \newcommand{\stla}[2]{\ensuremath{\overset{#1}{{\longrightarrow}}}}
         \newcommand\oapp[3]{\ensuremath{%
            {\cfont{(}#1 \ \, #2 \cfont{)}}^{#3} }}
         \newcommand\oderef[1]{\ensuremath{%
            \cfont{deref} \ #1}}
         \newcommand\oassign[2]{\ensuremath{%
            #1 \ \cfont{:=} \ #2}}
         \newcommand{\oth}[2]
            {#1 \, \cfont{:} \, #2}
         \newcommand\olet[5]{\ensuremath{%
          %  \cfont{let}_{#1} \ #2, #3 \ \cfont{= } \ \cfont{ref} \ #4 \ \cfont{in} \ #5}}
            \cfont{let} \ #2, #3 \ \cfont{= } \ \cfont{ref} \ #4 \ \cfont{in} \ #5}}
         \newcommand\oloc[1]{\ensuremath{%
            \cfont{loc}_{#1}}}
         \newcommand\oshare[2]{\ensuremath{%
            %\cfont{share}_{#1} \, {#2}}}
            \cfont{share} \ {#2}}}
         \newcommand\orelease[2]{\ensuremath{%
            %\cfont{release}_{#1} \, {#2}}}
            \cfont{release} \ {#2}}}
         \newcommand{\ocap}[3]{\ensuremath{%
            \PG{ocap FIXME!}
         }}
         \newcommand\olock[2]{\ensuremath{%
            \cfont{lock}_{#1} \ {#2}}}
         \newcommand\ounlock[2]{\ensuremath{%
        %    \cfont{unlock}_{#1} \, {#2}}}
            \cfont{unlock} \ {#2}}}
         \newcommand\ofunc[4]{\ensuremath{%
            \lambda #2.\, #3 \ \cfont{as}\ #4}}
         \newcommand\opoly[2]{\ensuremath{%
            \Lambda #1.\, #2}}
         \newcommand\orapp[2]{\ensuremath{%
            (#1)\, [#2]}}
         \newcommand\ounit{\ensuremath{%
            \cfont{()}}}
         \newcommand\oeval[2]{\ensuremath{%
            \cfont{pop}_{#1} \ #2}}
         \newcommand\nL[1]{\ensuremath{\mathsf{seq(}#1\mathsf{)}}\xspace}
         \newcommand\nP[1]{\ensuremath{\mathsf{par}}\xspace}
            \newcommand{\teff}[2]{\ensuremath{\,\&\, \cfont{(}#1\cfont{;}#2\cfont{)}}}
         \newcommand\tetyp[4]{\ensuremath{#1\vdash_{#4} #2 : #3}}
         \newcommand\ttype[5]{\ensuremath{
            \tetyp{#1}{#2}{#3 \teff{#4}{#5}}{}%{P}
         }}
         \newcommand\tetypext[4]{\ensuremath{#1\vdash_{#4} #2 : #3}}
         \newcommand\ttypeext[5]{\ensuremath{
            \tetypext{#1}{#2}{#3 \teff{#4}{#5}}{t}%{P}
         }}
         \newcommand\tholds[2]{\ensuremath{#1  \vdash #2}}
            \newcommand\teholds[3]{\ensuremath{#1  \vdash_{#3} #2}}
             \newcommand\ttuple[1]{\ensuremath{%
                                    \langle#1\rangle}}
             \newcommand\tunit{\ensuremath{%
                                 \ttuple{}}}
             \newcommand\tref[2]{\ensuremath{%
                                 \cfont{ref}(#1, #2)}}
             \newcommand\tfunc[4]{\ensuremath{%
                                    #1 \, \stla{#2}{#3}\, #4 }}
              \newcommand\tforall[2]{\ensuremath{%
                                       \forall \cfont{#1}.\,#2}}
         \newcommand{\stela}[2]{\ensuremath{\overset{#1;#2}{{%
                                 \longrightarrow}}}}
             \newcommand\tefunc[4]{\ensuremath{%
                                    #1 \, \stela{#2}{#3}\, #4 }}
\newcommand{\cstda}{\ensuremath{\cstd\Delta;\Gamma}}
\newcommand{\cstdD}[1] {\ensuremath {\cstd\Delta,#1;\Gamma} }
\newcommand{\cstdDG}[2] {\ensuremath{\cstd\Delta,#1;\Gamma,#2} }
\newcommand{\cT}{\ensuremath{\cstd\Delta} }
\newcommand\cprog {\ensuremath {S}}
\newcommand\conelt[3]{\ensuremath{%
   {#1}^{#2}}}
\newcommand\tbool{\ensuremath{% 
\cfont{bool}%
}}
\newcommand\otrue{\ensuremath{% 
   \cfont{true} }}
\newcommand\ofalse{\ensuremath{% 
   \cfont{false} }}
\newcommand\oif[3]{\ensuremath{%
 \cfont{if} \,\, {#1} \,\, \cfont{then} \,\, {#2} \,\, %
  \cfont{else} \,\, {#3}}}
\newcommand\ofix[3]{\ensuremath{%
 \cfont{fix} \ {#1}:{#2}. \, {#3} }}
\newcommand{\tfsep}{\ensuremath{%
\,\cfont{?}\,%
}}
\newcommand{\trIF}[1][\relax]{
\ifx#1\relax
   \nrule[T-IF]{
       \cstda \vdash e_1 : \tbool 
         \teff{\gamma, \gamma_2 \tfsep \gamma_3}{\gamma'}
       &
        \fmax{\gamma::\gamma_2} =
         \fmax{\gamma::\gamma_3}
       \\
       \cstda \vdash e_2 : 
       \tau \teff{\gamma}{\gamma :: \gamma_2}
       &
       \cstda \vdash e_3 : 
       \tau \teff{\gamma}{\gamma :: \gamma_3}
   }{
    \cstda \vdash \oif{e_1}{e_2}{e_3} :
    \tau \teff{\gamma}{\gamma'}
   }
\else
   \nrule[T-IF]{
       \cstda \vdash e_1 : \tbool 
         \teff{\gamma, \gamma_2 \tfsep \gamma_3}{\gamma'}
       &
        \fmax{\gamma::\gamma_2} =
         \fmax{\gamma::\gamma_3}
       \\
       \cstda \vdash e_2 : 
       \tau \teff{\gamma}{\gamma :: \gamma_2}
       &
       \cstda \vdash e_3 : 
       \tau \teff{\gamma}{\gamma :: \gamma_3}
   }{
    \cstda \vdash \oif{e_1}{e_2}{e_3} :
    \tau \teff{\gamma}{\gamma'}
   }
\fi
}
\newcommand{\trFX}{
   \nrule[T-FX]{
      \tau \aequiv \tfunc{\tau_1}{\gamma_b}{}{\tau_2}
      &
      \tau' \aequiv \tfunc{\tau_1'}{\gamma_a'}{}{\tau_2'}
      &
      \tau \simeq \tau'
      &
      \gamma_a \simeq \gamma_a'
      \\
      \cstda,x:\tau \vdash f :  \tau' \teff{\gamma}{\gamma}
      &
      \gamma_b = \mfun{summary}{\gamma_a}
      %      \gamma_a \vdash_{locked} \gamma_b
   }{
     \cstda \vdash   \ofix{x}{\tau}{f} : \tau \teff{\gamma}{\gamma}
   }
}
\newcommand{\vvdash}{\vdash_t}
\newcommand{\grRegion}{
   \gedef{Location}{r}{
      \rho 
      \nsor 
      \imath @n
      \nsor
      \rho @ n
   }
}
\newcommand{\grScope}{
   \gedef{Calling mode}{\xi}{
      \nL{\gamma} \nsor \nP{\pi}}
}
\newcommand{\grCap}{
   \gedef{Capability}{\kappa}{ n,n \nsor \overline{n,n}  }
}
\newcommand{\grEffectElt}{
   \gedef{Effect}{\gamma}{
      \emptyset 
         \nsor 
         \gamma,
         \ifoperational
            r^\kappa
         \else
            \rho^\kappa
         \fi
         \nsor
         \gamma, 
         \gamma 
         \tfsep 
         \gamma
   }
}
\newcommand{\grType}{
   \gedef{Type}{\tau}{
      \tunit \nsor
         \tfunc{\tau}{\gamma}{\gamma}{\tau} 
      \nsor
      \tforall{\rho}{\tau} 
      \nsorl
      \tref{\tau}{\loc}
      \nsor
      \tbool
   }
}
\newcommand{\grFunc}{
   \gedef{Function}{f}{
      \ofunc{\gamma}{x}{e}{
         \tfunc{\tau}{\gamma}{\gamma}{\tau}
      }
      \nsor
         \opoly{\rho }{f}
      \nsor
      \ofix{x}{\tau}{f}
   }
}
\newcommand{\grVal}{
   \gedef{Value}{v}{
      f \nsor
         \ounit \nsor
         \oloc{\imath} 
         \nsor
         \otrue
         \nsor
         \ofalse
   }
}
\newcommand{\grExpr}{
   \gedef{Expression}{e}{
      x  \nsor
      f \nsor
        \oapp{e}{e}{\xi} \nsor
         \orapp{e}{\loc}
      %\nsor
      %   \oblock{\imath}{\gamma}{e}
      %
      \nsor
         \oassign{e}{e} 
      \nsorl
      \oderef{e}
       \nsor
         \olet{\gamma}{\rho }{x}{e}{e} 
      \nsorl
         \oshare{\gamma}{e} 
      \nsor 
         \orelease{\gamma}{e}
      \nsor
         \olock{\gamma}{e}
      \nsorl
         \ounlock{\gamma}{e}
      \nsor
      \ounit
      \ifoperational
         \nsor
         \oeval{\gamma}{e}
         \nsor
         \oloc{\imath} 
      \fi
      \nsorl
                     \oif{e}{e}{e}
                     \nsor
                     \otrue
                     \nsor
                     \ofalse
   }
}
\newcommand{\oparrowt}[1]{ \, \ensuremath{ \leadsto }} %^{P} %_{#1}
\newcommand{\oparrowts}[1]{ \, \ensuremath{ \leadsto^{#1} }} %^{P}
\newcommand{\grDynCounts}{
      \gedef{Access Lists}{\theta}{ 
         \emptyset \nsor
            \theta,\imath \mapsto n;n;\epsilon;\epsilon
      }
}
\newcommand{\grRList}{
   \gedef{Store}{S}{  \emptyset \nsor S,\imath \mapsto v
   }
}
\newcommand{\grThread}{
   \gedef{Threads}{T}{\emptyset \nsor T, \oth{n}{\theta;e}   }
}
\newcommand{\grConf}{
   \gedef{Configuration}{C}{S;T}
}
\newcommand{\grEpsilon}{
   \gedef{Locations}{\epsilon}{\emptyset \nsor \epsilon,\imath}
}
\newcommand{\grEvalCont}[1][\relax]{\ensuremath{
%%%%%%%%%%%%%%%%%%%%%%%%%%%%%%%%%%%%%%%%%%%%%%%%%%%%%%%%%%%%5
   \gedef{Stack}{E}{
         \opbox \nsor %\nscomment{Empty evaluation context}
         E[F]
      }
   \gedef%{Evaluation context}
   {Frame}{F}{
         \oapp{\opbox}{e}{\xi} 
         \nsor \
            \oapp{v}{\opbox}{\xi} 
         \nsor \orapp{\opbox}{r}	%\nscomment{Functions and application} %\nsor E_\jjmath \, [\tau]
         \ifx#1\relax
            \nsor
          \else
            \nsorl
          \fi
          \olet{\gamma}{\rho}{x}{\opbox}{e} 
         \nsorl
            \oderef{\opbox} 
         \nsor
            \oassign{\opbox}{e} 
         \nsor 
            \oassign{v}{\opbox}
         %\nsor
            %\oblock{\imath}{\gamma}{\opbox}
         \ifx#1\relax
            \nsor
         \else
            \nsorl
         \fi
            \oshare{\gamma}{\opbox} 
         \nsor 
            \orelease{\gamma}{\opbox}
         \nsorl
            \olock{\gamma}{\opbox}
         \nsor
            \ounlock{\gamma}{\opbox}
         \nsor
            \oeval{\gamma}{\opbox}
         \nsorl
         \oif{\opbox}{e_1}{e_2}
      }
}}
\newcommand{\orApp}{
   \nrule[E-A]{
      v' \aequiv \ofunc{\gamma_1}{x}{e_1}
      {  
         \tau'
         %\gamma_c;
         %\tfunc{\tau_1}{\gamma_a}{\gamma_b}{\tau_2}
      }
   }{
        \cprog;  
          T,
            \oth{n}{\theta;
               E[\oapp{v'}{v}
              {\nL{\gamma_b}}]
         }
      %C;\oapp{v'}{v}{\nL{\gamma_b}}
      \oparrowt{}
      \cprog;T,\oth{n}
       {\theta;E[\oeval{\gamma_b}{e_1[v/x]}]}
      %C;\oeval{\gamma_b}{e_1[v/x]}
   }
}
\newcommand{\orRPoly}[1][\relax]{
   \nrule[E-RP]{
      \text{fresh } n_2 
   }{
     \ifx#1\relax\else
       \begin{array}[t]{@{}l@{}}
     \fi
      \cprog;T,\oth{n}{
         \theta;E[(\opoly{\rho}{f}) [ \imath @n_1]]
      }
%      C;(\opoly{\rho}{f}) [ \imath @n_1]
         \ifx#1\relax\else
            \\ \ntab
         \fi
      \oparrowt{\epsilon}
      \cprog;T,\oth{n}{\theta;E[f[\imath @ n_2/\rho]]}
%      C;f[\imath @ n_2/\rho]
     \ifx#1\relax\else
       \end{array}
     \fi
   }
}
\newcommand{\orShare}{
   \nrule[E-SH]{
      %			\mfun{is\_live}{\theta,\imath} 
      \theta(\imath) \geq (1,0)
         &
       %  \theta' = \theta[\imath \mapsto \theta(\imath) + (1,0) ] 
     \theta' = \fsub{\theta}{\imath}{1}{0}
   }{
      \cprog;T,\oth{n}{\theta;E[\oshare{\gamma_1}{\oloc{\imath}}]}
  %    C;\oshare{\gamma_1}{\oloc{\imath}}
      \oparrowt{}
      \cprog;T,\oth{n}{{\theta'};
         E[\ounit]}
     % C[\theta'];\ounit
   }
}
\newcommand{\orRelease}{
   \nrule[E-RL]{
      \theta(\imath) \geq (1,0)
         &
         %			\mfun{is\_live}{\theta,\imath} 
      %			&
         \theta(\imath) = (n_1,n_2)
         \\
         n_1 = 1 \Rightarrow n_2 = 0
         &
        \theta' = \fsub{\theta}{\imath}{-1}{0}
   }{
      \cprog;T,\oth{n}{\theta;E[\orelease{\gamma_1}{\oloc{\imath}}]}
  %    C;\orelease{\gamma_1}{\oloc{\imath}}
      \oparrowt{}
      \cprog;T,\oth{n}{{\theta'};
         E[\ounit]}
 %      C[\theta'];\ounit
   }
}
\newcommand{\orLockZero}{
   \nrule[E-LK0]{
         {\epsilon} = 
          \flockset{\imath}{1}
                   {E[\oeval{\gamma_1}{\opbox}]}
         &
         %\theta(\imath)  = (n_1,n_2) 
         \theta = \theta'',\imath \mapsto n_1;0;\epsilon_1;\epsilon_2
         \\
         \theta' = \theta'',\imath \mapsto n_1;1;\mfun{dom}{S};\epsilon
         &
         n_1 \geq 1
         & \mfun{locked}{T} \cap \epsilon  = \emptyset
   }{
      \cprog;T,\oth{n}{\theta;E[\olock{\gamma_1}{\oloc{\imath}}]}
      %C;\olock{\gamma_1}{\oloc{\imath}}
      \oparrowt{}
      %C[\theta'];\ounit
      \cprog;T,\oth{n}{{\theta'};
         E[\ounit]}
   }
}
\newcommand{\orLockOne}{
   \nrule[E-LK1]{
         \theta(\imath)  \geq (1,1) 
       &
          \theta' = \fsub{\theta}{\imath}{0}{1}
    }{
      \cprog;T,\oth{n}{\theta;E[\olock{\gamma_1}{\oloc{\imath}}]}
      %C;\olock{\gamma_1}{\oloc{\imath}}
      \oparrowt{}
      %C[\theta'];\ounit
      \cprog;T,\oth{n}{{\theta'};
         E[\ounit]}
   }
}
\newcommand{\orUnlock}{
   \nrule[E-UL]{
      %			\mfun{is\_accessible}{\theta,\imath} 
      %			&
         \theta(\imath) \geq (1,1)
         &
         \theta' = \fsub{\theta}{\imath}{0}{-1}
   }{
      \cprog;T,\oth{n}{\theta;E[\ounlock{\gamma_1}{\oloc{\imath}}]}
%      C;\ounlock{\gamma_1}{\oloc{\imath}}
      \oparrowt{}
      %C[\theta'];\ounit
      \cprog;T,\oth{n}{{\theta'};
         E[\ounit]}
   }
}
\newcommand{\orNewReg}{
   \nrule[E-NG]{
      \text{fresh}\ \imath @ n_1 &
         S' = S , \imath \mapsto v				
         &
         \theta' = \theta,\imath \mapsto 1;1;\emptyset;\emptyset
   }{
      \cprog;T,\oth{n}{\theta;
        E[\olet{\gamma_2}{\rho}{x}{v}{e_2}]}
     %      C;\olet{\gamma_2}{\rho}{x}{v}{e_2}
         \oparrowt{\epsilon}
         \cprog';
         T,\oth{n}{\theta';E[%\oblock{\imath}{\gamma_2}
            {e_2[\imath @ n_1/\rho][\oloc{\imath}/x]}]}
        % C[S'][\theta'];
        % {e_2[\imath @ n_1/\rho][\oloc{\imath}/x]}
   }
}
\newcommand{\orEval}{
   \nrule [E-PP]{} {
      \cprog;T,\oth{n}{\theta;E[\oeval{\gamma}{v}]}
    %  C;\oeval{\gamma}{v}
      \oparrowt{}
      \cprog;T,\oth{n}{{\theta};E[v]}
     % C;v
   }
}
\newcommand{\orIT}[1][\relax]{
   \nrule [E-IT]{} {
     \ifx#1\relax\else
       \begin{array}[t]{@{}l@{}}
     \fi
      \cprog;T,\oth{n}{\theta;E[\oif{\otrue}{e_1}{e_2}]}
%      C;\oif{\otrue}{e_1}{e_2}
         \ifx#1\relax\else
            \\ \ntab
         \fi
      \oparrowt{}
      \cprog;T,\oth{n}{{\theta};E[e_1]}
%      C;e_1
     \ifx#1\relax\else
       \end{array}
     \fi
   }
}
\newcommand{\orIF}[1][\relax]{
   \nrule [E-IF]{} {
     \ifx#1\relax\else
       \begin{array}[t]{@{}l@{}}
     \fi
      \cprog;T,\oth{n}{\theta;E[\oif{\ofalse}{e_1}{e_2}]}
  %    C;\oif{\ofalse}{e_1}{e_2}
         \ifx#1\relax\else
            \\ \ntab
         \fi
      \oparrowt{}
      \cprog;T,\oth{n}{{\theta};E[e_2]}
%      C;e_2
     \ifx#1\relax\else
       \end{array}
     \fi
   }
}
\newcommand{\orFix}[1][\relax]{
   \nrule [E-FX]{
%      f = \ofunc{}{y}{e_1}{\tau} 
   }{
     \ifx#1\relax\else
       \begin{array}[t]{@{}l@{}}
     \fi
    %  C;\oapp{\ofix{x}{\tau}{f}}{v}{\nL{\gamma_b}}
      \cprog;T,\oth{n}{\theta;
            E[\oapp{\ofix{x}{\tau}{f}}{v}{\nL{\gamma_a}}]
      }
         \ifx#1\relax\else
            \\ \ntab
         \fi
      \oparrowt{}
      \cprog;T,\oth{n}{{\theta}; %
            E[\oapp{f[\ofix{x}{\tau}{f}/x]}{v}{\nL{\gamma_a}}]
         }
      %C;\oapp{f[\ofix{x}{\tau}{f}/x]}{v}{\nL{\gamma_b}}
     \ifx#1\relax\else
       \end{array}
     \fi
   }
}
\newcommand{\orDeref}{
   \nrule[E-D]{
      %			\mfun{is\_accessible}{\theta,\imath} &
         \theta(\imath) \geq (1,1)
         &
         %\imath \in \mfun{dom}{S} &
         \imath \notin \mfun{locked}{T}
   }{
      \cprog;T,\oth{n}{\theta;E[\oderef{\oloc{\imath}}]}
      %C;\oderef{\oloc{\imath}}
      \oparrowt{\epsilon}
      \cprog ;T,\oth{n}{\theta;E[S(\imath)]}
    %  C;S(\imath)
   }
}
\newcommand{\orAsgn}{
   \nrule[E-AS]{
         \theta(\imath) \geq (1,1)
         &
         %\imath \in \mfun{dom}{S}	&
         \imath \notin \mfun{locked}{T}
         %\\
      %			\mfun{is\_accessible}{\theta,\imath} &
         %&
         %\cprog' = \cprog[\imath\mapsto v]
   }{
      \cprog;T,\oth{n}{\theta;E[\oassign{\oloc{\imath}}{v}]}
      %C;\oassign{\oloc{\imath}}{v}
      \oparrowt{\epsilon}
      \cprog[\imath\mapsto v];T,\oth{n}{\theta;E[\ounit]}
     % C[S'];\ounit
   }
}
\newcommand{\listitems}[1]{
   \renewcommand{\labelitemi}{-}
   \begin{itemize}
#1
   \end{itemize}
}
\newcommand{\orSpawn}{
   \nrule[E-SN]{
      v' \aequiv 
         \ofunc{\gamma_1}{x}{e_1}
      {\tfunc{\tau_1}{\gamma_a}{\gamma_2}{\tau_2}}  
      &
         \text{fresh $n'$}	
      &
    \fsplit{\theta_1}{\theta_2}{\theta}{\fmax{\gamma_a}}
   }{
      S;T,\oth{n}{\theta;E[\oapp{v'}{v}{\nP{???}}]}
      %C;\oapp{v'}{v}{\nP{???}}
      \oparrowt{n}
     %  (C[\theta_1];\ounit),
         S; T,\oth{n}{{\theta_1};E[\ounit]},
         \oth{n'}{\theta_2;
            \opbox[
               \oapp{v'}{v}{\nL{\fmin{\gamma_a}}}
               ]
         }
   }
}
\newcommand{\orTerminate}{
   \nrule[E-T]{  
         \forall \imath.\, \theta(\imath) = (0,0)
   }
   { 
      \cprog;T,\oth{n}{\theta;\ounit}
      %C;\ounit
      \oparrowt{n}
      \cprog; T }
      %      { \cprog;(\multi{\ldots \nsor  e_k \nsor () \nsor e_{k+1} } ) \rightarrow \cprog;(\multi{\ldots \nsor e_k \nsor  e_{k+1}} ) }{[E-T]}
}
\newcommand{\gplus}{\ensuremath{\oplus}}
\notin \mfun{dom}{\gamma'}. 
\jjmath \in [1,4]. \neg\mfun{is\_pure}{\kappa_\jjmath}
\newcommand{\trVar}{
   \nrule[T-V]{
      %\vdash_= \cstda;\phi
         \vdash \cstda;\gamma;\gamma
         \\
         (x:\tau') \in \Gamma
         &
         \tau \simeq \tau'
   }{
      \ttype{\cstda}{x}{\tau}{{\gamma}}{\gamma}
   }
}
\newcommand{\trUnit}{
   \nrule[T-U]{
      %\vdash_= \cstda;\gamma
         \vdash \cstda;\gamma;\gamma
   }{
      \ttype{\cstda}{()}{\tunit}{{\gamma}}{\gamma}
   }
}
\newcommand{\trTrue}{
   \nrule[T-TR]{
      %\vdash_= \cstda;\gamma
         \vdash \cstda;\gamma;\gamma
   }{
      \ttype{\cstda}{\otrue}
      {\tbool}{{\gamma}}{\gamma}
   }
}
\newcommand{\trFalse}{
   \nrule[T-FL]{
      %\vdash_= \cstda;\gamma
         \vdash \cstda;\gamma;\gamma
   }{
      \ttype{\cstda}{\ofalse}
      {\tbool}{{\gamma}}{\gamma}
   }
}
\newcommand{\trLoc}{
   \nrule[T-L]{
      %\vdash_= \cstda;\gamma
         \vdash \cstda;\gamma;\gamma
         \\
         (\imath \mapsto \tau') \in M 
         &
         \tau \simeq \tref{\tau'}{\imath}
   }{
      \ttype{\cstda}{\oloc{\imath}}
      {\tau}{{\gamma}}{\gamma}
   }
}
\newcommand{\trRPFunc}{
   \nrule[T-RF]{
      \ttype{ \cstdD{\rho} }{f}{\tau}{{\gamma}}{\gamma}
   }{
      \ttype{ \cstda}{ \opoly{\rho}{f}}
      {\tforall{\rho}{\tau}}{{\gamma}}{\gamma}
   }
}
\newcommand{\trFunc}{
   \nrule[T-F]{
      %\iftypesafety
      %e_1 = v 
      %\Rightarrow \gamma_b = \emptyset
      %\fi
      %&
      %\vdash_= \cstda;\gamma;\gamma     
      \vdash \cstda;\gamma;\gamma     
         &
         \tau' \aequiv 
         \tfunc{\tau_1}
      {\gamma_b}
      {\gamma_b}
      {\tau_2} 
      &
         M;\Delta \vdash   \tau'
         &			
         \tau \simeq \tau'
         \\
         \nL{\emptyset} \vdash \gamma_b
         \Rightarrow
         \ttype{M;\Delta;\Gamma,x:\tau_1}
               {e_1}
               {\tau_2}
               {\fmin{\gamma_b}}
               {\gamma_b}
   }{
      \ttype{\cstda}{\ofunc{}{x}{e_1}{\tau'}}
      {\tau}{\gamma}{\gamma}
   }
}
\newcommand{\trRPApp}[1][\relax]{
\ifx#1\relax%
   \nrule[T-RP]{
%      \iftypesafety
%         \PG{$\tau' \simeq \tau[\loc/\rho]$}
%      \else
%         \tau' \simeq \tau[\loc/\rho]
%      \fi
%      &
%      \iftypesafety
%         \PG{$\loc \in \mfun{dom}{\gamma}$}
%         \else
%         \loc \in \mfun{dom}{\gamma}
%      \fi
%     \\
      \tholds{\cT}{r}  
      &
      \tholds{\cT}{\tau[\loc/\rho]}
      \\
         \ttype{\cstda}{e_1}{\tforall{\rho}{\tau}}
      {\gamma}{\gamma'}
   }{
      \ttype{\cstda}{\orapp{e_1}{\loc}}{\tau[\loc/\rho]}{\gamma}{\gamma'}
   }
\else
   \nrule[T-RP]{
      \tholds{\cT}{r}  
      &
      \tholds{\cT}{\tau[\loc/\rho]}
      &
      \ttype{\cstda}{e_1}{\tforall{\rho}{\tau}}{\gamma}{\gamma'}
   }{
      \ttype{\cstda}{\orapp{e_1}{\loc}}{\tau[\loc/\rho]}{\gamma}{\gamma'}
   }
\fi
}
\newcommand{\trAsgn}{
   \nrule[T-AS]{
      \ttype{\cstda}{e_1}{\tref{\tau}{\loc}}
      {\gamma_1}{\gamma'} 
      \\
         \ttype{\cstda}{e_2}{\tau}
      {\gamma}{\gamma_1} 
      & 
         \mfun{\gamma}{\loc} \geq (1,1)
   }{
      \ttype{\cstda}{ \oassign{e_1}{e_2}}
      {\tunit}{\gamma}{\gamma'}
   }
}
\newcommand{\trDeref}{
   \nrule[T-D]{
         \mfun{\gamma}{\loc} \geq (1,1)
      \\
      \ttype{\cstda}{e_1}{\tref{\tau}{\loc}}{\gamma}{\gamma'} 
   }{
      \ttype{\cstda}{\oderef{e_1}}{\tau}{\gamma}{\gamma'}
   }
}
\newcommand{\trNewRgn}[1][\relax]{
\ifx#1\relax
   \nrule[T-NG]{
      \ttype{\cstda}{e_1}{\tau_1}{\gamma_1 \setminus \rho}{\gamma'}
         &
         \mfun{\gamma_1}{\rho} = (1,1)
         %\gamma_1 = \gamma_2,\rho^{1,1}
         \\
         \teholds{\cT}{\tau}{} 
         &
         \ttype{\cstdDG{\rho}{x:\tref{\tau_1}{\rho}}}{e_2}{\tau}
         { \gamma,\rho^{0,0}}
         {\gamma_1} 
   }{
      \ttype{\cstda}
      {\olet{\gamma_1}{\rho}{x}{e_1}{e_2}}{\tau}
      {\gamma}{\gamma'}
   }
\else
   \nrule[T-NG]{
      \ttype{\cstda}{e_1}{\tau_1}{\gamma_1 \setminus \rho}{\gamma'}
         &
         \mfun{\gamma_1}{\rho} = (1,1)
         &
         \teholds{\cT}{\tau}{} 
         &
         \ttype{\cstdDG{\rho}{x:\tref{\tau_1}{\rho}}}{e_2}{\tau}
         { \gamma,\rho^{0,0}}
         {\gamma_1} 
   }{
      \ttype{\cstda}
      {\olet{\gamma_1}{\rho}{x}{e_1}{e_2}}{\tau}
      {\gamma}{\gamma'}
   }
\fi
}
\newcommand{\trApp}{
   \nrule[T-A]{
      \ttype{ \cstda }{e_1}
      {\tfunc{\tau_1}{\gamma_a}{\gamma_b}{\tau_2}}
      {\gamma_3}{\gamma'} 
      &
      \xi \vdash \gamma_2 =  \gamma \gplus \gamma_a
      \\
      \ttype{ \cstda}{e_2}{\tau_1}{\gamma_2}{\gamma_3}
      &
      %\xi = \nP{} \Rightarrow \tau_2 = \tunit
      %\ifthenelse{\xi = \nL{\gamma_4}}
      %{\gamma=\gamma_4}
      %{\tau_2 = \tunit }
      \xi = \nP{}  \Rightarrow \tau_2 = \tunit
   }{
      \ttype{\cstda}{\oapp{e_1}{e_2}{\xi}}{\tau_2}
      {\gamma}
      {\gamma'}
   }
}
\newcommand{\trEval}{
   \nrule[T-PP]{
      \ttype{\cstda}
      {e}
      {\tau'}
      {\fmin{\gamma_b}}
      {\gamma_b}
      &
      \gamma_b \simeq \gamma_b'
      \\
      \nL{\gamma} \vdash \gamma' =  \gamma \gplus \gamma_b'
    %  \gamma_2 \trSub \gamma_3
      &
     % M;\Delta \vdash \gamma_3
     % &
      \tau' \simeq \tau
      &
      \vdash M;\Delta;\Gamma;\gamma;\gamma'
      %&
      %e = v \Rightarrow 
      %\gamma_1 = \fmin{\gamma_1} 
   }{
      \ttype{\cstda}{\oeval{\gamma}{e}}
      {\tau}{\gamma}{\gamma'}
   }
}
\newcommand{\trShare}{
   \nrule[T-SH]{
      \ttype{\cstda}
      {e}
      {\tref{\tau}{\loc}}
      {\gamma,
        r^{\kappa - (1,0)}
      }
      {\gamma'}
      \\ 
         \kappa \geq (2,0)
         &
         \mfun{\gamma}{\loc} = \kappa
   }{
      \ttype{\cstda}{\oshare{\gamma}{e}}
      {\tunit}{\gamma}{\gamma'}
   }
}
\newcommand{\trRelease}{
   \nrule[T-RL]{
      \ttype{\cstda}
      {e}
      {\tref{\tau}{\loc}}
      {\gamma,
      r^{\kappa + (1,0)}
      }
      {\gamma'}
      \\
      \kappa = (n_1,n_2) 
      &
       n_1 = 0 \Rightarrow n_2 = 0
       &
         \mfun{\gamma}{\loc} = \kappa
   }{
      \ttype{\cstda}{\orelease{\gamma}{e}}
      {\tunit}{\gamma}{\gamma'}
   }
}
\newcommand{\trLock}{
   \nrule[T-LK]{
      \ttype{\cstda}
      {e}
      {\tref{\tau}{\loc}}
      {\gamma,
            r^{\kappa- (0,1)}}
      {\gamma'}
      \\
         \kappa \geq (1,1)
         &
         \mfun{\gamma}{\loc} = \kappa
   }{
      \ttype{\cstda}{\olock{\gamma}{e}}
      {\tunit}{\gamma}{\gamma'}
   }
}
\newcommand{\trUnlock}{
   \nrule[T-UL]{
      \ttype{\cstda}
      {e}
      {\tref{\tau}{\loc}}
      {\gamma,
         r^{\kappa+(0,1)}
      }
      {\gamma'}
      \\
       \kappa \geq (1,0)
       &
         \mfun{\gamma}{\loc} = \kappa
   }{
      \ttype{\cstda}{\ounlock{\gamma}{e}}
      {\tunit}{\gamma}{\gamma'}
   }
}
\newcommand{\trThread}{
      \nrule[EA]{
         \cstda \vdash e : 
            \tau \teff{\gamma_a}{\gamma_b}
         &
            \cstda \vdash E : 
            \tefunc{\tau}{\gamma_a}{\gamma_b}{\tunit} 
         \teff{\gamma_1}{\gamma_2}
         \\
         %\subseteq_{min} \gamma_1
         &
         \forall r^\kappa \in \gamma_1.\, \kappa = (0,0) 
         &
         \mfun{counts\_ok}{E[\oeval{\gamma_b}{\opbox}],\theta}
         &
         \mfun{lockset\_ok}{E[\oeval{\gamma_b}{\opbox}],\theta}
         %&
         %\nopop{E[u]}
         %&
         %\gamma_c =  \ifthenelse{u \neq \oeval{\gamma_a}{v}}
          %                      {\gamma_b}
           %                     {\gamma_a}
      }{
         \cstda \vvdash \theta; E[e] :
            \tunit 
            \teff{\gamma_1}{\gamma_2}
      }
}
\newcommand{\trThreads}{
   \nrule{ %  \teholds{P}{\Gamma_0}{P} &  \rand{\Gamma}{\Gamma_0}{\Gamma_0}
   } {
      \tholds{M}{\emptyset}
   }
   \rspace
      \nrule{
         % \ttype{M;\emptyset;\emptyset}{E[e]}{\tunit}{\gamma}{\emptyset}
         M;\emptyset;\emptyset \vvdash \theta;e : \tunit % \ \& \ \gamma
            \teff{\gamma}{\gamma'}
         %		&
            %		\mfun{redex}{e}
         &
            \tholds{M}{T} 
         &
            n \notin  \mfun{dom}{T}
      } 
   { 
      M \vdash {T,\oth{n}{\theta;e}}
   }
}
\newcommand{\trConfig}{

   \nrule{
      \tholds{M}{T} 
      &
         M \vdash S
         %&
         %	\\
         %  \neg \mfun{deadlocked}{T} 
      &
         \mfun{mutex}{T}
      %	&
         %	\bigcup_{\oth{n}{\theta;e}\in T} \mfun{dom}{\theta} = \mfun{dom}{S}
   }{
      M \vdash S;T
   }
}
\newcommand{\trStore}{
   \nrule{
      \mfun{dom}{M} = \mfun{dom}{S} 
      &
         \forall (\imath \mapsto \tau) \in M.
         \ttype{M;\emptyset;\emptyset}{S(\imath)}{\tau}{\emptyset}{\emptyset}
   }{
      M \vdash S
   }
}
      \newenvironment{definition}[1][Definition]{\begin{trivlist}
         \item[\hskip \labelsep {\bfseries #1}]}{\end{trivlist}}
    \newif\iftechrep
\renewcommand{\labelitemi}{Case}
\newlength{\ncasewidth}
\newcommand{\casesplit}[1]%
{\bgroup%
  \begin{list}{Case}{%
    \settowidth\ncasewidth{Case}%
    \setlength{\labelwidth}{\ncasewidth}%
    \setlength{\leftmargin}{\ncasewidth}%
    \settowidth\ncasewidth{\ }%
    \setlength{\labelsep}{\ncasewidth}%
    \addtolength{\leftmargin}{\ncasewidth}%
  }
  #1
  \end{list}%
  \egroup%
}
\newcommand{\sfig}[2]{
  \subsection{#2}
  \bgroup\footnotesize%
  %\begin{figure}[h]
   #1
  %\caption{#2}
  %\end{figure}
  \egroup
}
\newcommand{\trFormalismTable}[3]{
  %\begin{table}[h!]
  %\caption{#2}
  \subsection{#2}
  %\begin{center}
  %\begin{tabular}{| p{#1}  | p{10.5cm} |}
  %\hline
  %Relation & Description \\[\trTableLineSpace]
  %\hline
  \def\trFormalismItem##1##2{%
    %##1 & ##2 \\[\trTableLineSpace]
    \item%
      \begin{tabular}[t]{@{}p{#1}@{}}%
        ##1%
      \end{tabular}%
      ##2
  }
  \begin{list}{}{%
    \setlength{\labelwidth}{0cm}%
    \setlength{\labelsep}{0cm}%
    \setlength{\itemindent}{-#1}%
    \setlength{\itemsep}{3pt plus 1pt minus 1pt}%
    \setlength{\parsep}{0pt}
    \setlength{\leftmargin}{#1}%
  }
       #3
  \end{list}
  %   \\
  %\hline
  %  \end{tabular}
  %\end{center}
  %\end{table}
}
\newcommand{\trTable}{
  \trFormalismTable{5cm}{Formalism Summary: Operational Semantics}
  {\small
    \trFormalismItem{$\mfun{locked}{T}$}{%
      takes a list of threads $T$ and
      returns a set of locations
      locked by threads in $T$.
    }

    \trFormalismItem{$\fsub{\theta}{\imath}{n_1}{n_2}$}{%
      updates the map $\theta$ so that
      the reference and lock counts of $\theta(\imath)$
      are incremented by $n_1$ and $n_2$ respectively.
    }

    \trFormalismItem{$\mfun{\theta}{\imath}$}{%
      returns the reference and lock counts of $\theta(\imath)$.
    }

\ifeptcs\else
    \trFormalismItem{$\mfun{lk}{\kappa}$}{%
      takes a capability $\kappa$ and returns
      the lock count of $\kappa$. 
    }
    \trFormalismItem{$\mfun{locked}{\gamma}$}{%
      takes an ordered list of events $\gamma$
      and returns a set of locations
      $\epsilon$, whose lock count is 
      incremented (at some point) in $\gamma$. 
    }
\fi

    \trFormalismItem{
      $\fsplit{\theta_1}{\theta_2}{\theta}{\fmax{\gamma_a}}
      $}{%
      takes $\gamma_a$ (the effect of a new thread) and $\theta$
      and returns $\theta_1$ and $\theta_2$, such that
      the sum of the counts of each location in $\theta_1$ and
      $\theta_2$ equals the counts of the same location in $\theta$.
    }

\ifeptcs\else
    \trFormalismItem{$(\epsilon,n')= \fspan{\imath}{n}{\gamma}$}{%
      the lockset computation function takes three
      arguments, the location $\imath$, the expected
      number of unlock operations $n$ and the frame
      annotation $\gamma$ and returns a lockset 
      $\epsilon$ along with the unlock operations that
      were not found in $\gamma$. The lockset $\epsilon$
      contains locations that were locked while $n$ is
      positive.
    }
\fi

    \trFormalismItem{$\flockset{\imath}{n}{E}$}{%
      traverses the evaluation context $E$ and returns 
      the future lockset for $\imath$ acquired $n$ times,
      only examining frames of the form $\oeval{\gamma}{\opbox}$.
      The traversal ends when $E$ is empty or $n$ is zero.
    }

\ifeptcs\else
    \trFormalismItem{$S;T \oparrowt{} S;T'$}{%
      the reduction relation defines the operational
      semantics of multi-threaded programs by re-writing 
      terms and the store $S$.
    }
    \trFormalismItem{$e[v/x]$}{%
      the standard term-level substitution function
      maps expessions to expressions by substituting
      occurences of $x$ with $v$.
    }
    \trFormalismItem{$\tau[r/\rho]$}{%
      the standard type-level substitution function
      maps types to types by substituting
      occurences of $r$ with $\rho$.
    }
\fi
  }

  \trFormalismTable{5cm}{Formalism Summary: Static Semantics}
  {\small
    \trFormalismItem{$M;\Delta \vdash \tau$}{%
      well-formedness judgement
      within a typing context $M;\Delta$ for type $\tau$.
    }
\ifeptcs\else
    \trFormalismItem{$M;\Delta \vdash \gamma$}{%
      well-formedness judgement
      within a typing context $M;\Delta$ for effect $\gamma$.
    }
\fi
    \trFormalismItem{$M;\Delta \vdash r$}{%
      well-formedness judgement
      within a typing context $M;\Delta$ for location $r$.
    }
\ifeptcs\else
    \trFormalismItem{$M;\Delta \vdash \Gamma$}{%
      well-formedness judgement
      within a typing context $M;\Delta$ for term variable context $\Gamma$.
    }
    \trFormalismItem{$\vdash M$}{%
      well-formedness judgement
      for constant location typing context $M$.
    }
\fi
    \trFormalismItem{$\vdash M;\Delta;\Gamma;\gamma_1;\gamma_2$}{%
      well-formedness judgement
      for typing context $M;\Delta;\Gamma$ and effect
      $(\gamma_1;\gamma_2)$.
    }
    \trFormalismItem{$\xi \vdash \gamma$}{%
      ensures that
      pure capabilities are not aliased within $\gamma$.
      In the case of parallel application 
      (i.e.,\, $\xi=\nP{}$), the ending capability
      of each location must be zero,
      whereas the starting capability of each location
      must have a zero lock count
      when that capability is impure. 
    }
    \trFormalismItem{$\gamma(r)$}{%
      returns the most recent (i.e., rightmost)
      occurence of $r$ within effect $\gamma$.
    }
\ifeptcs\else
    \trFormalismItem{$\mfun{is\_pure}{\kappa}$}{%
      true when $\kappa$ is of the form ${n_1,n_2}$.
    }
    \trFormalismItem{$\mfun{set}{\gamma}$}{%
      true when there exist no
      duplicate locations in $\gamma$ or branch
      effects (i.e.,\, $\gamma_1 \tfsep \gamma_2$).
    }
\fi
    \trFormalismItem{$\mfun{max}{\gamma'}$}{%
      returns a subset of $\gamma'$, 
      say $\gamma$ such that
      no duplicate locations or branches exist,
      the domain of $\gamma'$ 
       equals the domain of $\gamma$ and
      each element of $\gamma$ is equal to $\gamma(r)$
      for any $r$ in the domain of $\gamma$. 
    }
    \trFormalismItem{$\fmin{\gamma'}$}{%
      takes $\gamma'$ and
      returns a \emph{prefix} $\gamma'$ of $\gamma$ 
      such that
      no duplicate locations or branches exist and
      the domain of $\gamma'$ equals the domain of $\gamma$.
    }
    \trFormalismItem{$\gamma \setminus r$}{%
      takes $\gamma$ and 
      $r$ and removes all occurences of 
      $r'$ from $\gamma$
      such that $r'$ is identical to $r$ modulo
      the tags of constant locations.
      % (i.e.,\, $r\simeq r'$).
    }
    \trFormalismItem{$\xi \vdash 
                      \gamma' = \gamma \gplus \gamma_1$}{%
      takes $\gamma$, representing the environment effect \emph{before}
      a function call,
      the function effect $\gamma_1$ and 
      yields the environment effect 
      $\gamma'$ 
      representing the environment effect \emph{after}
      the function call.
      $\gamma$ is a prefix of $\gamma'$ and the suffix
      of $\gamma'$ is an adjusted version of 
      $\gamma_1$: the order of locations is the same as
      in $\gamma_1$ but the counts may be greater 
      than the ones in $\gamma_1$ as some counts may
      have been abstracted withing the scope of
      the function. It also enforces $\xi \vdash \gamma$.
      }

\ifeptcs
    \trFormalismItem{$\kappa \geq \kappa'$}{%
      true if both counts of $\kappa$ are no
      smaller than the correspoding counts of $\kappa'$.
    }
    \trFormalismItem{$\kappa + \kappa'$, $\kappa - \kappa'$}{%
      calculate the sum and difference of two
      capabilities (considered here as two-dimensional vectors).
    }
    \trFormalismItem{$\mfun{summary}{\gamma}$}{%
      used primarily for calculating the summarized effects of
      recursive functions.
    }
\fi

\ifeptcs\else
    \trFormalismItem{$r \simeq r'$}{%
      true when the locations
      resulting after removing tags from $r$ and $r'$ 
      are equivalent.
    }

    \trFormalismItem{$\gamma \simeq \gamma'$}{%
      true when $\gamma$ and $\gamma'$
      are structurally equivalent up to $\simeq$ for locations.
    }
\fi
    \trFormalismItem{$\tau \simeq \tau'$}{%
      true when $\tau$ and $\tau'$
      are structurally equivalent after removing $@n$ annotations
      from locations.
    }
\ifeptcs\else
    \trFormalismItem{$M;\Delta;\Gamma \vdash e:\tau\teff{\gamma}{\gamma'}$}{%
      the expression typing judgement that takes
      the typing context $M;\Delta;\Gamma$, the 
      expression $e$ and the input effect $\gamma$ and 
      returns the type $\tau$ and output effect $\gamma'$
      assigned to $e$.
    }
\fi
    \trFormalismItem{$M;\Delta;\Gamma \vdash E : 
      \tefunc{\tau}{\gamma_a}{\gamma_b}{\tau'}
      \teff{\gamma_1}{\gamma_2}
      $}{%
      the evaluation typing context judgement that 
      takes the typing context 
      $M;\Delta;\Gamma$, the evaluation context $E$,
      the expected effect $(\gamma_a;\gamma_b)$
      and the expected type $\tau$ 
      (for the innermost hole in $E$),
      the input effect $\gamma_1$ and returns the type
      $\tau'$ and the effect $\gamma_2$
      that will be returned by $E$ when it is 
      filled with an expression of type $\tau$ and effect
      $(\gamma_a;\gamma_b)$.
    }
  }

  \trFormalismTable{5cm}{Formalism Summary: Type Safety}
  {\small
    \trFormalismItem{$\mfun{blocked}{T,n}$}{%
      true when thread $n$ of thread
      list $T$ is in a blocked (i.e,\, waiting for a lock)
      state.
    }
    \trFormalismItem{$\mfun{mutex}{T}$}{%
      true when each lock is held
      by at most one thread of $T$.
    }
\ifeptcs\else
    \trFormalismItem{$\mfun{deadlocked}{T}$}{%
      true when there exists a subset
      of $T$, say $T'$, such that each thread in $T'$
      has acquired a lock and is waiting for a lock that
      is acquired by another thread in $T'$.
    }
    \trFormalismItem{$M;\Delta;\Gamma \vdash F : 
      \tefunc{\tau}{\gamma_a}{\gamma_b}{\tau'}
      \teff{\gamma_1}{\gamma_2}
      $}{%
      similar to the previous judgement. 
    }
    \trFormalismItem{$M;\Delta;\Gamma \vdash
      E[e] : \tau \teff{\gamma}{\gamma'}$}{%
      extends the typing judgement with evaluation contexts.
    }
    \trFormalismItem{$M;\Delta;\Gamma \vvdash
      \theta;E[e] : \tau \teff{\gamma}{\gamma'}$}{%
      this judgement types the evaluation
      context $E$, the expression $e$ 
      and establishes a correspondence between
      the \emph{access list} $\theta$ and the static effect $\gamma'$.
%      It also requires that $E[e]$ is equal to the unit value or
%      $e$ is a redex. 
      %That is, for each location $\imath$ owned by some thread $m$, 
      %the dynamic reference and lock counts of $\imath$
      %are identical to the static counts of $\imath$ 
      %deduced by the type system for the evaluation context of $m$.
    }
\fi
\mycomment{
    \trFormalismItem{$\theta;\gamma \vdash \theta'$}{%
      this judgement takes the access list $\theta$ and
      an effect $\gamma$ and subtracts the capability
      for each location in $\gamma$ from the corresponding
      location of $\theta$. The resulting access list is 
      $\theta'$. 
    }
   \trFormalismItem{$E;\gamma \vdash_{st} \theta;\epsilon$}{%
      this judgement takes an evaluation context $E$ and
      the effect of the outermost
      context enclosed by $E$ ($\gamma$)
      and outputs a $\theta$ and $\epsilon$ such that
      the counts of each location in $\theta$ match the 
      the sum of counts of all \emph{residual effects} of $E$
      and the locations in $\epsilon$ represent locations
      having pure capabilities with positive counts.
    }

   \trFormalismItem{$\gamma\vdash_{nl} \theta;\epsilon'$}{%
      this judgement takes $\gamma$ and $\epsilon$ and
      outputs a $\theta$ and $\epsilon'$ such that
      the counts of $\gamma$ equal the counts of $\theta$
      and all locations in $\epsilon$ correspond to 
      pure capabilities with positive counts. 
  }
}

    \trFormalismItem{$\mfun{counts\_ok}{E,\theta}$}{%
      takes an evaluation context $E$
      and an access list $\theta$ and holds
      when the sum of all $\cfont{pop}$ expression annotations
      in $E$
      % \emph{residual effects} of $E$
      equal the counts of $\theta$. 
      It establishes an exact correspondence
      between dynamic and static counts.
    }

    \trFormalismItem{$\mfun{lockset\_ok}{E,\theta}$}{%
      takes an evaluation context $E$
      and an access list $\theta$ and holds
      when
      the future lockset ($\mathsf{lockset}$ function) 
      of an acquired lock at any program point
      is \emph{always} a subset of the future lockset
      computed when the lock was initially acquired.
    }
\ifeptcs\else
    \trFormalismItem{$M \vdash S $}{%
      store typing judgement. Each value assigned to 
      location $\imath$ must be well typed in the 
      context $M;\emptyset;\emptyset$ with type $M(\imath)$.
    }
    \trFormalismItem{$M \vdash T,n:\theta;E[e]$}{%
      the thread typing judgement is true,
      when $M;\emptyset;\emptyset 
      \vvdash \theta;E[e] :\tau \teff{\gamma}{\gamma'}$
      holds for each thread. 
    }
    \trFormalismItem{$M \vdash S;T$}{%
      the configuration typing judgement is true when
      the threads of $T$ are well-typed, the store $S$
      is well-typed, and $\mfun{mutex}{T}$ holds 
      (i.e.,\, locks are held by at most one thread of $T$).
    }
    \trFormalismItem{$ \vdash S;T$}{%
      this judgement is true when $S;T$ is 
      \emph{not stuck}.
      That is, each thread can take a step or is waiting
      for some lock and there exists no subset of $T'$ such
      that threads in that subset are deadlocked.
    }
    \trFormalismItem{$ S;T \oparrowt{n}^n S';T'$}{%
      this relation extends the reduction relation so
      that $S;T$ is transformed to $S';T'$ in $n$ steps.
    }
\fi
  }
}
\renewcommand{\rline}[1][1.25em]{\ensuremath{\\[#1]}\footnotesize}
\renewcommand{\rspace}{\ensuremath{\hspace{1.5em}}}
\newcommand\Figref[1]{Figure~\ref{#1}}
\newcommand\figref[1]{Figure~\ref{#1}}
\newcommand{\ignore}[1]{}
\newcommand{\cf}[1]{\leavevmode\raise.2ex\hbox{$\scriptscriptstyle\ll$}#1\,\leavevmode\raise.2ex\hbox{$\scriptscriptstyle\gg$}}
\author{%
  Prodromos Gerakios 
  \hspace*{1em}
  Nikolaos Papaspyrou
  \hspace*{1em}
  Konstantinos Sagonas 
\institute{%
  School of Electrical and Computer Engineering,
  National Technical University of Athens, Greece
}
\email{$\{\,$pgerakios$,\,$nickie$,\,$kostis$\,\}\,$@softlab.ntua.gr}
}
\renewcommand{\labelitemi}{-}
\begin{document}
%% Bring items closer together in list environments 
% Prevent infinite loops
\let\orig@Itemize =\itemize         
\let\orig@Enumerate =\enumerate
\let\orig@Description =\description
% Zero the vertical spacing parameters
\def\Nospacing{\itemsep=0pt\topsep=0pt\partopsep=0pt%
\parskip=0pt\parsep=0pt}
% Redefinition de art12.sty pour commencer a la marge de gauche
%\leftmargini 1.2em      % 2.5em

\def\noitemsep{
% Redefine the environments in terms of the original values
\renewenvironment{itemize}{\orig@Itemize\Nospacing}{\endlist}
\renewenvironment{enumerate}{\orig@Enumerate\Nospacing}{\endlist}
\renewenvironment{description}{\orig@Description\Nospacing}%
{\endlist}
}

\def\doitemsep{
% Redefine the environments to the original values
\renewenvironment{itemize}{\orig@Itemize}{\endlist}
\renewenvironment{enumerate}{\orig@Enumerate}{\endlist}
\renewenvironment{description}{\orig@Description}{\endlist}
}

\maketitle

\newcommand\myparagraph[1]{%
  \par\vskip 3pt plus 3pt%
  \noindent\textbf{#1.}\hskip 1em plus 0.25em minus 0.25em%
  %\clubpenalty 0%
}

%------------------------------------------------------------------------------
\begin{abstract}
  Deadlocks occur in concurrent programs as a consequence of
  cyclic resource acquisition between threads. In this paper we
  present a novel type system that guarantees deadlock freedom for
  a language with references, unstructured locking primitives, and
  locks which are implicitly associated with references.
  The proposed type system does not impose a strict lock acquisition
  order and thus increases programming language expressiveness.
\end{abstract}
%------------------------------------------------------------------------------

\section{Introduction}
%=====================

Lock-based synchronization may give rise to deadlocks. Two or more
threads are deadlocked when each of them is waiting for a lock that
is acquired by another thread.
According to Coffman \textit{et al.}~\cite{SysDeadlocks@CompSurv-71}, a set of
threads reaches a \emph{deadlocked state}
when the following conditions hold:
\bgroup\noitemsep
\begin{itemize}
  \item \emph{Mutual exclusion}:
    Threads claim exclusive control of the locks that they acquire.
  \item \emph{Hold and wait}:
    Threads already holding locks may request (and wait for) new locks.
  \item \emph{No preemption}:
    Locks cannot be forcibly removed from threads; they must be released
    explicitly by the thread that acquired them.
  \item \emph{Circular wait}:
    Two or more threads form a circular chain, where each thread waits
    for a lock held by the next thread in the chain.
\end{itemize}
\egroup
Coffman has identified three strategies that guarantee deadlock-freedom
by denying at least one of the above conditions \emph{before} or
\emph{during} program execution:
\bgroup\noitemsep
\begin{itemize}
 \item \emph{Deadlock prevention}:
   At each point of execution, \emph{ensure} that at least one of the
   above conditions is not satisfied. Thus, programs that fall into this
   category are correct by design.
 \item \emph{Deadlock detection and recovery}:
   A dedicated observer thread \emph{determines} whether
   the above conditions are satisfied and preempts some of
   the deadlocked threads, releasing (some of) their locks,
   so that the remaining threads can make progress.
 \item \emph{Deadlock avoidance}:
   Using information  that is computed in advance regarding thread
   resource allocation, \emph{determine} whether granting a lock will
   bring the program to an \emph{unsafe} state, i.e.,\ a state which
   can result in deadlock, and only grant locks that lead to safe states.
\end{itemize}
\egroup

Several type systems have been proposed that guarantee deadlock freedom,
the majority of which is based on the first two strategies.
In the deadlock prevention category,
one finds type and effect systems
that guarantee deadlock freedom
by statically enforcing a global lock acquisition order
that must be respected by all threads~\cite{%
  FlanaganAbadi@ESOP-99,%
  OwnershipTypes@OOPSLA-02,%
  Kobayashi@CONCUR-06,%
  NonLexicalDeadlock@APLAS-08,%
  Vasco@PLACES-09%
}.
In this setting,
% starting with the works of Kobayashi~\cite{Kobayashi@LICS-97} and
% of Flanagan and Abadi~\cite{FlanaganAbadi@ESOP-99},
lock handles are associated with type-level lock names via the use of
singleton types.
Thus, handle $lk_\imath$ is of type $\cfont{lk}(\imath)$. The same applies
to lock handle variables.
The effect system tracks the order of lock operations on handles or
variables and determines whether all threads acquire locks in the same
order.

Using a strict lock acquisition order is a constraint we want to
avoid, as it unnecessarily rejects many correct programs.
It is not hard to come up with an example that shows that
imposing a partial order on locks is too restrictive. The simplest of
such examples can be reduced to program fragments of the form:
\begin{ndisplay}
  (\cfont{lock} \ x \ \cfont{in} \ \ldots \
   \cfont{lock} \ y \ \cfont{in} \ \ldots)
  \ \ || \ \
  (\cfont{lock} \ y \ \cfont{in} \ \ldots \
   \cfont{lock} \ x \ \cfont{in} \ \ldots)
\end{ndisplay}
In a few words, there are two parallel threads which acquire two different
locks, $x$ and $y$, in reverse order.
When trying to find a partial order $\le$ on locks for this program,
the type system or static analysis tool
will deduce that $x \le y$ must be true, because of the first thread,
and that $y \le x$ must be true, because of the second.
Thus, the program will be rejected, both in the system of Flanagan and
Abadi which requires annotations~\cite{FlanaganAbadi@CONCUR-99} and in
the system of Kobayashi which employs
inference~\cite{Kobayashi@CONCUR-06} as there is no single lock order
for \emph{both} threads.
Similar considerations apply to the more recent works of
Suenaga~\cite{NonLexicalDeadlock@APLAS-08} and Vasconcelos
\textit{et al.}~\cite{Vasco@PLACES-09} dealing with non
lexically-scoped locks.

Our work follows the third strategy (deadlock avoidance).
% and is part of a more general effort to design a low-level
% language suitable for systems
% programming~\cite{Reglock@PLACES-09,Cycreglock@TLDI-10} that not only
% guarantees deadlock freedom but also memory safety, race freedom and
% definite release of resources such as memory and locks.
%
It is based on an idea put forward recently by Boudol,
who proposed a type system for deadlock avoidance that
is more permissive than existing approaches~\cite{Boudol@ICTAC-09}.
However, his system is suitable for programs that use
\emph{exclusively} lexically-scoped locking primitives.
In this paper we present a simple language with functions, mutable
references, explicit (de-)allocation constructs and unstructured
(i.e.,\ non lexically-scoped) locking primitives.
%
%We argue that the addition of unstructured locking primitives
%makes Boudol's system unsound and we show that it is possible to regain
%soundness by preserving more information about the order of events,
%both statically and dynamically.
Our approach ensures deadlock freedom for the proposed language by 
preserving exact information about the order of events,
both statically and dynamically.
It also forms the basis for a much simpler approach to providing
deadlock freedom, following a quite different path, that is
easier to program and amenable to type inference, which
has been implemented for C/pthreads~\cite{DeadlockAvoidance@TLDI-11}.

In the next section, we informally describe Boudol's idea and
present an informal overview of our type and effect system.
In Section~\ref{sec:formalism} we formally define the syntax
of our language, its operational semantics and the type and
effect system. In Section~\ref{sec:safety} we reason
about the soundness of our system and the paper ends with a few
concluding remarks.

\section{Deadlock Avoidance} \label{sec:avoidance}
%=================================================
Recently, Boudol developed a type and effect system for deadlock
freedom~\cite{Boudol@ICTAC-09}, which is based on \emph{deadlock
  avoidance}.
The effect system calculates
for each expression the set of acquired locks
%that will be acquired in the future
and annotates lock operations with the ``future'' lockset.
The runtime system utilizes the inserted
annotations so that each lock operation can only
proceed when its ``future'' lockset is unlocked.
The main advantage of Boudol's type system is that it allows a
larger class of programs to type check and thus
increases the programming language expressiveness
as well as
concurrency by allowing arbitrary locking schemes.

The previous example can be rewritten in Boudol's language as follows,
assuming that the only lock operations in the two threads are those
visible:
\begin{ndisplay}
  (\cfont{lock}_{\{y\}} \ x \ \cfont{in} \ \ldots \
   \cfont{lock}_{\emptyset} \ y \ \cfont{in} \ \ldots)
  \ \ || \ \
  (\cfont{lock}_{\{x\}} \ y \ \cfont{in} \ \ldots \
   \cfont{lock}_{\emptyset} \ x \ \cfont{in} \ \ldots)
\end{ndisplay}
This program is accepted by Boudol's type system which, in general,
allows locks to be acquired in \emph{any} order.
At runtime, the first lock operation of the first thread must ensure
that $y$ has not been acquired by the second (or any other) thread,
before granting $x$ (and symmetrically for the second thread).
The second lock operations need not ensure anything special, as the
future locksets are empty.

The main disadvantage of Boudol's work is that locking operations have
to be lexically-scoped. In fact, as we show here, even if Boudol's
language had \cfont{lock}/\cfont{unlock} constructs, instead
of \cfont{lock}$\,\ldots\,$\cfont{in}$\,\ldots$, his type system is
not sufficient to guarantee deadlock freedom.
The example program in \figref{fig:example:boudol}(a) will help us see why:
It updates the values of three shared variables, $x$, $y$ and $z$,
making sure at each step that only the strictly necessary locks are held.%
\footnote{To simplify presentation, we assume here that there is one
  implicit lock per variable, which has the same name.
  This is more or less consistent with our formalization in
  Section~\ref{sec:formalism}.}
\newcommand{\dista}{6cm}
\newcommand{\fstmpdist}{.5\textwidth}
\newcommand{\sndmpdist}{.3\textwidth}
\begin{figure}[t]
  \small
  \centering
  \begin{minipage}[t]{\fstmpdist}
  \begin{ndisplay}[noindent]
    \cfont{let} \nabstab{1cm}
    f \ \cfont{=} \
    \lambda \, x. \,
    \lambda \, y. \,
    \lambda \, z. \ntab
    \nbox{%
      \cfont{lock}_{\{y\}} \ x; \nabstab{\dista}
      x := x+1; \\
      \cfont{lock}_{\{z\}} \ y; \nabstab{\dista}
      y := y+x; \\
      \cfont{unlock}\ x; \\
      \cfont{lock}_{\emptyset}\ z; \nabstab{\dista}
      z := z+y; \\
      \cfont{unlock}\ z; \\
      \cfont{unlock}\ y
    } \\
    \cfont{in} \nabstab{1cm}
    f\ a\ a\ b
  \end{ndisplay}
  \vskip -6pt
  \centering (a) before substitution
  \end{minipage}
  \vline
  \begin{minipage}[t]{\sndmpdist}
  \begin{ndisplay}
    \cfont{lock}_{\{a\}} \ a; \nabstab{2cm}
    a := a+1; \\
    \cfont{lock}_{\{b\}} \ a; \nabstab{2cm}
    a := a+a; \\
    \cfont{unlock}\ a; \\
    \cfont{lock}_{\emptyset}\ b; \nabstab{2cm}
    b := b+a; \\
    \cfont{unlock}\ b; \\
    \cfont{unlock}\ a \\
  \end{ndisplay}
  \vskip -6pt
  \centering (b) after substitution
  \end{minipage}
  \caption{An example program,
    which is well typed before substitution (a)
    but not after (b).\label{fig:example:boudol}}
\end{figure}%

In our na\"{i}vely extended (and broken, as will be shown)
version of Boudol's system,
the program in \figref{fig:example:boudol}(a) will type check.
The future lockset annotations of the three locking
operations in the body of $f$ are $\{y\}$, $\{z\}$ and $\emptyset$,
respectively.
(This is easily verified by observing the lock operations
between a specific \cfont{lock}/\cfont{unlock} pair.)
Now, function $f$ is used by instantiating both $x$ and $y$ with the
same variable $a$, and instantiating $z$ with a different variable $b$.
The result of this substitution is shown in
\figref{fig:example:boudol}(b).
The first thing to notice is that, if we want this program to work
in this case, locks have to be \emph{re-entrant}.
This roughly means that if a thread holds some lock,
it can try to acquire the same lock again;
this will immediately succeed, but then the thread will
have to release the lock \emph{twice}, before it is
actually released.

Even with re-entrant locks, however, the
program in \figref{fig:example:boudol}(b)
does not type check with the present annotations.
The first \cfont{lock} for $a$ now matches with the \emph{last}
(and not the first) \cfont{unlock};
this means that $a$ will remain locked during the whole execution
of the program.
In the meantime $b$ is locked, so the future lockset annotation
of the first \cfont{lock} should contain $b$, but it does not.
(The annotation of the second \cfont{lock} contains $b$,
but blocking there if lock $b$ is not available does not prevent
a possible deadlock; lock $a$ has already been acquired.)
So, the technical failure of our na\"{i}vely extended language
is that the preservation lemma breaks.
From a more pragmatic point of view, if a thread running in
parallel already holds $b$ and, before releasing it, is about
to acquire $a$, a deadlock can occur.
The na\"ive extension of Boudol's system also fails for another
reason: it is based on the assumption that calling a function
cannot affect the set of locks held by a thread.
This is obviously not true, if non lexically-scoped locking
is to be supported.

The type and effect system proposed in this paper
supports unstructured locking, by preserving
more information at the effect level.
%
%Instead of calculating an unordered set of locks, the type system
%precisely tracks the order of \cfont{lock} and \cfont{unlock}
%operations, without enforcing a strict lock-acquisition order.
Instead of treating effects as unordered collections of locks,
our type system precisely tracks effects as an
order of \cfont{lock} and \cfont{unlock}
operations, without enforcing a strict lock-acquisition order.
The \emph{continuation effect} of a term represents the effect of the
function code succeeding that term.
In our approach, lock operations
are annotated with a continuation effect.
When a \cfont{lock} operation is evaluated, the future lockset
is calculated by inspecting its continuation effect.
The \cfont{lock} operation succeeds only when
both the lock and the future lockset are available.

\Figref{fig:example:ours} illustrates the same program as
in \figref{fig:example:boudol},
except that locking operations are now annotated
with continuation effects.
%``ordered future'' lockset.
%
\newcommand{\distb}{7cm}%
\newcommand{\distc}{3.5cm}%
\newcommand{\distlam}{1cm}%
\newcommand{\distlamb}{3.5cm}%
\newcommand{\sndmpdistba}{.55\textwidth}%
\newcommand{\sndmpdistbb}{.35\textwidth}%
\newcommand\then{\ensuremath{,\linebreak[0]\,}}%
\begin{figure}[t]
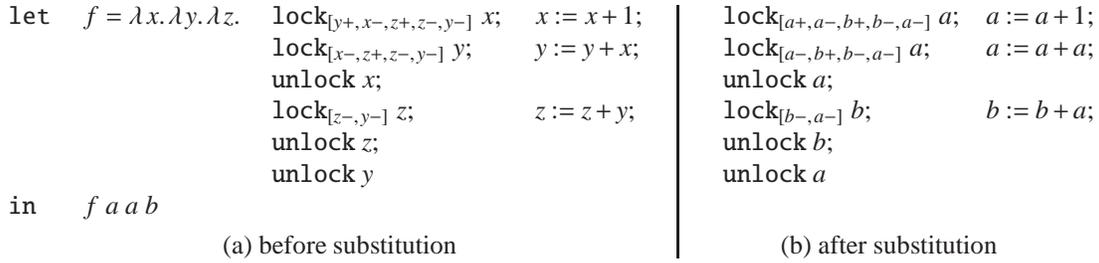

  \small
  \centering
  \begin{minipage}[t]{\sndmpdistba}
  \begin{ndisplay}[noindent]
    \cfont{let} \nabstab{\distlam}
    f \ \cfont{=} \
    \lambda \, x. \,
    \lambda \, y. \,
    \lambda \, z.
    \nabstab{\distlamb}
    \nbox{%
      \cfont{lock}_{[y+\then x- \then z+ \then z- \then y-]} \ x; \nabstab{\distb}
      x := x+1; \\
      \cfont{lock}_{[x- \then z+ \then z- \then y-]} \ y; \nabstab{\distb}
      y := y+x; \\
      \cfont{unlock}\ x; \\
      \cfont{lock}_{[z- \then y-]} \ z; \nabstab{\distb}
      z := z+y; \\
      \cfont{unlock}\ z; \\
      \cfont{unlock}\ y
    } \\
    \cfont{in} \nabstab{1cm}
    f\ a\ a\ b
  \end{ndisplay}
  \vskip -6pt
  \centering (a) before substitution
  \end{minipage}
  \vline
  \begin{minipage}[t]{\sndmpdistbb}
  \begin{ndisplay}%[noindent]
    \cfont{lock}_{[a+\then a- \then b+ \then b- \then a-]} \ a; \nabstab{\distc}
    a := a+1; \\
    \cfont{lock}_{[a- \then b+ \then b- \then a-]} \ a; \nabstab{\distc}
    a := a+a; \\
    \cfont{unlock}\ a; \\
    \cfont{lock}_{[b- \then a-]}\ b; \nabstab{\distc}
    b := b+a; \\
    \cfont{unlock}\ b; \\
    \cfont{unlock}\ a \\
  \end{ndisplay}
  \vskip -6pt
  \centering (b) after substitution
  \end{minipage}
  \caption{The program of \figref{fig:example:boudol}
    with continuation effect annotations;
    now well typed in both cases.\label{fig:example:ours}}
\end{figure}%
For example, the annotation $[y+\then x-\then z+ \then z- \then y-]$
at the first \cfont{lock}
operation means that in the future (i.e.,\ after this lock operation)
$y$ will be acquired, then $x$ will be released, and so on.%
\footnote{In the examples of this section, a simplified version of
  effects is used, to make presentation easier.
  In the formalism of Section~\ref{sec:formalism}, the plus and minus
  signs would be encoded as differences in lock counts, e.g.,
  $y+$ would be encoded by a $y^{1,0}$ (an unlocked $y$)
  followed in time by a $y^{1,1}$ (a locked $y$).}
If $x$ and $y$ were different, the runtime system would deduce
that between this \cfont{lock} operation on $x$ and the
corresponding \cfont{unlock} operation, only $y$ is locked,
so the future lockset in Boudol's sense would be $\{y\}$.
On the other hand, if $x$ and $y$ are instantiated with the same
$a$, the annotation becomes $[a+ \then a- \then b+ \then b- \then a-]$
and the future lockset that is calculated is now the correct $\{a, b\}$.
In a real implementation, there are several optimizations that can be
performed (e.g.,\ pre-calculation of effects) but we do not deal with
them in this paper.

There are three issues that must be faced, before we can apply this
approach to a full programming language.
First, we need to consider continuation effects in an interprocedural
manner: it is possible that a lock operation in the body of function
$f$ matches with an unlock operation in the body of function $g$
after the point where $f$ was called, directly or indirectly.
In this case, the future lockset
for the lock operation may contain locks that are not visible in the
body of $f$.
We choose to compute function effects intraprocedurally and to
annotate each application term with a continuation effect,
which represents the effect of the code succeeding the application
term in the calling function's body.
A runtime mechanism pushes information about continuation effects
on the stack and, if necessary, uses this information to correctly
calculate future locksets, taking into account the continuation
effects of the enclosing contexts.

Second, we need to support conditional statements.
The tricky part here is that, even in a simple conditional
statement such as
\begin{ndisplay}
  \cfont{if}\ c
  \ \cfont{then}\ (\cfont{lock}\ x; ...\ \cfont{unlock}\ x)
  \ \cfont{else}\ (\cfont{lock}\ y; ...\ \cfont{unlock}\ y)
\end{ndisplay}
the two branches have different effects: $[x+\then x-]$ and
$[y+\then y-]$, respectively.
A typical type and effect system would have to reject this
program, but this would be very restrictive in our case.
We resolve this issue by requiring that the \emph{overall}
effect of both alternatives is the same.  This (very roughly)
means that, after the plus and minus signs cancel each other
out, we have equal numbers of plus or minus signs for each
lock in both alternatives.
Furthermore, we assign the \emph{combined} effect of the
two alternatives to the conditional statement, thus keeping
track of the effect of both branches;
in the example above, the combined effect is denoted by
$[x+\then x-] \tfsep [y+\then y-]$.

The third and most complicated issue that we need to face
is support for recursive functions.
Again, consider a simple recursive function of the form
\begin{ndisplay}
  \cfont{fix}\, f.
  \ \lambda\, x.
  \ \cfont{if}\ c
  \ \cfont{then}\ (...\ f(y)\ ...)
  \ \cfont{else}\ ...
\end{ndisplay}
Let us call $\gamma_f$ the effect of $f$ and $\gamma_b$ the
computed effect for the body of $f$.
It is easy to see that $\gamma_b$ must \emph{contain} $\gamma_f$
and, if any lock/unlock operations are present in the body of $f$,
$\gamma_b$ will be strictly larger than $\gamma_f$.
Again, a typical type and effect system would require that
$\gamma_b = \gamma_f$ and reject this function definition.
We resolve this issue by computing a \emph{summary} of $\gamma_b$
and requiring that the summary is equal to $\gamma_f$.
In computing the summary, we can make several simplifications
that preserve the calculation of future locksets for operations
residing \emph{outside} function $f$.
For instance, we are not interested whether a lock is acquired
and released many times or just once, we are not interested in
the exact order in which lock/unlock pairs occur, and we can
flatten branches.

\section{Formalism} \label{sec:formalism}
%========================================
The syntax of our language is illustrated in \figref{fig:syntax:core},
where $x$ and $\rho$ range over term and ``region'' variables, respectively.
%---------------------------------------------------------------------
\newcommand{\lsdist}{.6\textwidth}
\begin{figure}[t]
  \small
  \begin{minipage}[t]{\lsdist}
    \gbox{1.65cm}{2cm}
	 {
           \grExpr
           \grVal
	   \grFunc
    	 }
  \end{minipage}
  \begin{minipage}[t]{.35\textwidth}
    \gbox{2.0cm}{2cm}
	 {
           \grType
           \grRegion
           \grScope
           \grCap
           \grEffectElt
	 }
  \end{minipage}
  \caption{Language syntax.\label{fig:syntax:core}}
\end{figure}%
%---------------------------------------------------------------------
Similarly to our previous work~\cite{Cycreglock@TLDI-10},
a region is thought of as a memory unit that can be shared between threads
and whose contents can be atomically locked.
In this paper, we make the simplistic assumption that there is a one-to-one
correspondence between regions and memory cells (locations), but this is of
course not necessary.

The language core comprises of
  variables ($x$),
  constants (the unit value, $\otrue$ and $\ofalse$),
  functions ($f$), and
  function application.
Functions can be location polymorphic ($\opoly{\rho}{f}$) and
location application is explicit ($e[\rho]$).
Monomorphic functions ($\lambda x.\, e$)
must be annotated with their type.
The application of monomorphic functions is annotated with a
\emph{calling mode} ($\xi$),
which is $\nL{\gamma}$ for normal (sequential) application
     and $\nP{\epsilon}$ for parallel application.
Notice that sequential application terms are annotated with $\gamma$,
the \emph{continuation effect} as mentioned earlier.
The semantics of parallel application is that, once the
parameters have been evaluated and substituted,
the function's body is moved to a new thread of execution and
the spawning thread can proceed with the remaining computation
in parallel with the new thread.
The term $\oeval{\gamma}{e}$ encloses a function body $e$ and
can only appear during evaluation. %The annotation $\gamma$
%represents the effect assigned to $e$ statically.
The same applies to constant locations $\imath @n$, which cannot exist
at the source-level.
The construct $\olet{\gamma}{\rho}{x}{e_1}{e_2}$ allocates a fresh cell,
initializes it to $e_1$, and associates it
with variables $\rho$ and $x$ within expression $e_2$.
As in other approaches, we use $\rho$ as
the type-level representation of the new cell's location.
The reference variable $x$ has the singleton type $\tref{\rho}{\tau}$,
where $\tau$ is the type of the cell's contents.
This allows the type system to connect $x$ and $\rho$ and
thus to statically track uses of the new cell.
As will be explained later, the cell can be consumed either by
deallocation or by transferring its ownership to another thread.
Assignment and dereference operators are standard.
The value $\oloc{\imath}$ represents a reference to a location $\imath$
and is introduced during evaluation. Source programs cannot contain
$\oloc{\imath}$.

At any given program point, each cell is associated with a
\emph{capability} ($\kappa$).
Capabilities consist of two natural numbers, the \emph{capability counts}:
the \emph{cell reference} count,
which denotes whether the cell is live,
and the \emph{lock} count,
which denotes whether the cell
has been locked to provide
the current thread with
exclusive access to its contents.
Capability counts determine the validity of operations on cells.
When first allocated, a cell starts with capability $(1, 1)$,
meaning that it is live and locked, which provides exclusive
access to the thread which allocated it.
(This is our equivalent of thread-local data.)
Capabilities can be either \emph{pure} ($n_1,n_2$) or \emph{impure}
$(\overline{n_1,n_2})$. In both cases, it is implied that the
current thread can decrement the cell reference count $n_1$ times
and the lock count $n_2$ times.
Similarly to \emph{fractional permissions}~\cite{Boyland@SAS-03},
impure capabilities denote that a location may be aliased.
Our type system requires aliasing information so as to determine
whether it is safe to pass lock capabilities to new threads.

The remaining language constructs ($\oshare{}{e}$, $\orelease{}{e}$,
$\olock{\gamma}{e}$ and $\ounlock{}{e}$) operate on a reference $e$.
The first two constructs \emph{increment} and \emph{decrement} the
cell reference count of $e$ respectively. Similarly, the latter two
constructs \emph{increment} and \emph{decrement} the lock count of $e$.
%
%If the cell reference count reaches zero, then the cell may be
%physically deallocated and no subsequent operation on it can be
%performed. If the lock count becomes zero, then the cell may be
%unlocked.
%
As mentioned earlier, the runtime system inspects the lock
annotation $\gamma$ to determine whether it is safe to lock $e$.
%
%For instance, memory access operations require that
%the involved cells have a positive reference and lock count.
%
%A capability of the form $(n_1, n_2)$ is called a \emph{pure}
%capability, whereas a capability of the form $(\overline{n_1,n_2})$
%is called an \emph{impure} capability.
%In both cases, it is implied that the current thread can decrement
%the cell count $n_1$ times and the lock count $n_2$ times.
%\emph{fractional capabilities}~\cite{Boyland@SAS-03},
%pure or other impure capabilities into several pieces,
%e.g.,\ $(3, 2) = (\overline{2, 1}) + (\overline{1, 1})$.
%These pieces are useful for restricted cell aliasing, when the same cell is
%to be passed to a function in the place of two distinct cell
%parameters.
%
%Notice, that pure capabilities can \emph{only} be split into
%%impure capabilities. Thus, a reference having pure capability implies
%that there are no other references to the same cell.
%By contrast,
%
%an impure capability implies that there may exist more than one
%references to the same cell.
%
%The use of such capabilities must be restricted;
%e.g.,\ an impure capability with a non-zero lock count cannot
%be passed to a new thread, as the spawning thread may have
%retained some lock capabilities.
%
%it is unsound
%to allow two threads to simultaneously hold
%the same lock.
%
%Capability split and join operations take place automatically
%before and after function application respectively.

\subsection{Operational Semantics} \label{sec:operational}
%---------------------------------------------------------
We define a \emph{small-step} operational semantics for our language
in \figref{fig:operational}.\footnote{%
  Due to space limitations, some of the functions and judgements that
  are used by the operational and (later) the static semantics are not
  formally defined in this paper.
\ifeptcstr
  Verbal descriptions and a full formalization
  are given in the Appendix.
\else
  Verbal descriptions are given in the Appendix.
  A full formalization is given in the
  companion technical report~\cite{ReglockDeadlock}.
\fi
}
%
%--------------------------------------------------------------------------
\newcommand{\ecdista}{0.42\textwidth}%
\newcommand{\ecdistb}{0.57\textwidth}%
\begin{figure}[t]
  \begin{minipage}[t]{\ecdista}
    \gbox{2.5cm}{2cm}
	 {
           \grConf
           \grRList
           \grThread
           \grEpsilon
           \grDynCounts
    	 }
  \end{minipage}
  \hfill
  \begin{minipage}[t]{\ecdistb}
    \gbox{1.2cm}{1cm}{
      \grEvalCont
    }
  \end{minipage}
  \par\noindent
  \bgroup\footnotesize
    \textbf{Reduction relation} \hfill \fbox{$C \oparrowt{} C'$}
  \egroup
  \vspace*{-6pt}\par\noindent
  \bgroup\centering
  \newcommand\mysmall{\fontsize{8}{9.5}\selectfont}%
  \begin{nruledisplay}\footnotesize
    \orSpawn \rspace \orTerminate \rline
    \orApp \rspace \orEval \rline
    \orRPoly[break] \rspace \orFix[break] \rline[2.5em]
    \orIT[break] \rspace \orIF[break] \rline
    \orNewReg \rline
    \orAsgn \rspace \orDeref \rline
    \orShare \rspace \orRelease \rline
    \orLockZero \rline
    \orLockOne \rspace \orUnlock
  \end{nruledisplay}
  \egroup
  \vskip -6pt
  \caption{Operational semantics.%
    \label{fig:operational}}
\end{figure}%
%---------------------------------------------------------------------------
%
The evaluation relation transforms \emph{configurations}.
A configuration $C$
consists of an abstract \emph{store} $S$ and
a thread map $T$.%
\footnote{The order of elements in comma-separated lists,
  e.g.,\ in a store $S$ or in a list of threads $T$,
  is unimportant; we consider
  all list permutations as equivalent.}
A store $S$ maps constant locations ($\imath$) to values ($v$).
%
%\PG{Constant regions}
%Tagged constant locations ($\imath @n$) exist
%are the form $\imath @n$, which is a constant region $\imath$
%tagged with a unique identifier $n$. Source programs cannot
%contain constant regions.
%
A thread map $T$ associates thread identifiers to expressions
(i.e.,\ threads) and access lists.
An \emph{access list} $\theta$ maps location identifiers to
\emph{reference} and \emph{lock} counts.

A \emph{frame} $F$ is an expression
with a \emph{hole}, represented as $\opbox$.
The hole indicates the position where the next reduction step
can take place.
A \emph{thread evaluation context} $E$,
is defined as a stack of nested frames.
Our notion of evaluation context imposes a call-by-value
evaluation strategy to our language.
Subexpressions are evaluated in a left-to-right order.
We assume that concurrent reduction events can be totally
ordered~\cite{Lamport@TOPLAS-79}.
At each step, a \emph{random} thread ($n$) is chosen from the thread
list for evaluation.
Therefore, the evaluation rules are \emph{non-deterministic}.

% E-SN
When a parallel function
application redex is detected within the evaluation context of a
thread, a new thread is created (rule \nrulelabel{E-SN}).
The redex is replaced with a unit value in the currently executed thread
and a new thread is added to the thread list, with a \emph{fresh}
thread identifier. The calling mode of the application term is
changed from parallel to sequential.
The continuation effect associated with the sequential annotation
equals the resulting effect of the function being applied
(i.e.,\ $\fmin{\gamma_a}$).
Notice, that $\theta$ is divided into two lists
$\theta_1$ and $\theta_2$ using the new thread's initial
effect $\fmax{\gamma_a}$ as a reference for consuming
the appropriate number of counts from $\theta$.
% E-T
On the other hand, when evaluation of a thread reduces to a unit
value, the thread is removed from the thread list (rule \nrulelabel{E-T}).
This is successfuly only if the thread has previously released all
of its resources.

% E-A
The rule for sequential function application (\nrulelabel{E-A})
reduces an application redex to a \cfont{pop} expression, which contains
the body of the function and is annotated with the same effect as
the application term.
% E-PP
Evaluation propagates through \cfont{pop} expressions
(rule \nrulelabel{E-PP}), which are only useful for calculating
future locksets in rule \nrulelabel{E-LK0}.
% E-RP, E-FX, E-IF, E-IT
The rules for evaluating the application of polymorphic functions
(\nrulelabel{E-RP}) and recursive functions (\nrulelabel{E-FX})
are standard, as well as the rules for evaluating conditionals
(\nrulelabel{E-IT} and \nrulelabel{E-IF}).

The rules for reference allocation, assignment and dereference
are straightforward.
% E-NG
Rule~\nrulelabel{E-NG} appends a fresh location $\imath$
(with initial value $v$) and the dynamic count $(1,1)$
to $S$ and $\theta$ respectively.
% E-AS, E-D
Rules~\nrulelabel{E-AS} and~\nrulelabel{E-D} require that
the location ($\imath$) being accessed is both live and accessible
and no other thread has access to $\imath$.
Therefore dangling memory location accesses as well as unsynchronized
accesses cause the evaluation to get \emph{stuck}.
% E-SH, E-RL, E-UL
Furthermore, the rules~\nrulelabel{E-SH}, \nrulelabel{E-RL}
and~\nrulelabel{E-UL} manipulate a cell's reference or lock
count.
They are also straightforward, simply checking that the cell
is live and (in the case of \nrulelabel{E-UL}) locked.
Rule \nrulelabel{E-RL} makes sure that a cell is unlocked
before its reference count can be decremented to zero.

% E-LK
The most interesting rule is~\nrulelabel{E-LK0}, which applies
when the reference being locked ($\imath$) is initially unlocked.
The future lockset ($\epsilon$) is dynamically computed,
by inspecting the preceding stack frames (${E}$)
as well as the lock annotation ($\gamma_1$).
The lockset $\epsilon$ is a list of locations (and thus locks).
The reference $\imath$ must be live
and no other thread must hold either $\imath$ or any of the
locations in $\epsilon$.
Upon success, the lock count of $\imath$ is incremented by one.
On the other hand, rule~\nrulelabel{E-LK1} applies when $\imath$
has already been locked by the current thread
(that tries to lock it again).
This immediately succeeds and the lock count is incremented by one.

\subsection{Static Semantics} \label{sec:typing}
%===============================================
We now present our type and effect system
and discuss the most interesting parts.
Effects are used to statically track the capability of each cell.
An effect ($\gamma$) is an \emph{ordered list} of elements of the form
$\loc^\kappa$ and summarizes the sequence of operations
(e.g., locking or sharing) on references.
%
%
%that cell $\loc$ is associated with capability
%$\kappa$.
The syntax of types in \figref{fig:syntax:core}
(on page~\pageref{fig:syntax:core})
is more or less standard:
Atomic types consist of base types
(the unit type, denoted by $\tunit$, and $\tbool$);
reference types $\tref{\tau}{\loc}$
are associated with a type-level cell name $\loc$
and monomorphic function types carry an \emph{effect}.
\Figref{fig:typrules} contains the typing rules.
The typing relation is denoted by
$\ttype{\cstda}{e}{\tau}{\gamma}{\gamma'}$, where
$\cstda$ is the typing context,
$e$ is an expression,
$\tau$ is the type attributed to $e$,
$\gamma$ is the \emph{input effect}, and
$\gamma'$ is the \emph{output effect}.
In the typing context,
$M$ is a mapping of constant locations to types,
$\Delta$ is a set of cell variables, and
$\Gamma$ is a mapping of term variables to types.
\begin{figure*}[t]
  \bgroup\footnotesize
    \textbf{Typing relation} \hfill
    \fbox{$\ttype{\cstda}{e}{\tau}{\gamma}{\gamma'}$}
  \egroup
  \bgroup\centering
  \begin{nruledisplay}\footnotesize
    \trUnit \rspace\hskip -2pt \trTrue \rspace\hskip -2pt \trFalse \rline
    \trVar \rspace
    \trFunc \rline
    \trRPFunc \rspace
    \trRPApp[long] \rline
    \trApp \rspace
    \trEval \rline
    \trFX \rspace
    \trLoc \rline
    \trNewRgn[long] \rline
    \trAsgn \rspace
    \trDeref \rline
    \trShare \rspace
    \trRelease \rline
    \trLock \rspace
    \trUnlock \rline
    \trIF[long]
  \end{nruledisplay}
  \egroup
  \vskip -6pt
  \caption{Typing rules.\label{fig:typrules}}
\end{figure*}%

%As we mentioned in Section~\ref{sec:goals},
Lock operations and sequential application terms are annotated
with the continuation effect.
This imposes the restriction that effects must flow backwards.
The input effect $\gamma$ to an expression $e$ is indeed the
continuation effect; it represents the operations that follow
the evaluation of $e$.
On the other hand, the output effect $\gamma'$ represents the
combined operations of $e$ and its continuation.
The typing relation guarantees that the input effect is
always a \emph{prefix} of the output effect.

% T-U, T-TR, T-FL, T-V, T-L, T-RF, T-RP
The typing rules \nrulelabel{T-U}, \nrulelabel{T-TR},
\nrulelabel{T-FL}, \nrulelabel{T-V}, \nrulelabel{T-L},
\nrulelabel{T-RF} and \nrulelabel{T-RP} are almost
standard, except for the occasional premise $\tau \simeq \tau'$
which allows the type system to ignore the identifiers
used for location aliasing and, for example, treat the types
$\imath @{n_1}$ and $\imath @{n_2}$ as equal.
% T-F
The typing rule \nrulelabel{T-F} checks that, if the effect
$\gamma_b$ that is annotated in the function's type is well formed,
it is indeed the effect of the function's body.
% T-A
On the other hand, the typing rule \nrulelabel{T-A} for function
application has a lot more work to do.
It joins the input effect $\gamma$ (i.e., the continuation effect)
and the function's effect $\gamma_a$, which contains the entire
history of events occurring in the function body;
this is performed by the premise
$\xi \vdash \gamma_2 = \gamma \gplus \gamma_a$, which performs
all the necessary checks to ensure that all the capabilities
required in the function's effect $\gamma_a$ are available,
that pure capabilities are not aliased, and, in the case of
parallel application, that no lock capabilities are split
and that the resulting capability of each location is zero.
% T-PP
Rule \nrulelabel{T-PP} works as a bridge between the body of a
function that is being executed and its calling environment.
% T-FX
%Also, as we hinted in Section~\ref{sec:goals}, 
Rule \nrulelabel{T-FX}
uses the function $\mathsf{summary}$ to summarize the effect of
the function's body and to check that the type annotation indeed
contains the right summary.
The effect summary is conservatively computed as
the set of locks that are acquired within the function body; the
unmatched lock/unlock operations are also taken into account.

% T-NG
Rule \nrulelabel{T-NG} for creating new cells passes the input
effect $\gamma$ to $e_2$, the body of $\cfont{let}$, augmented
by $\rho^{0,0}$.
This means that, upon termination of $e_2$, both references and
locks of $\rho$ must have been consumed.
The output effect of $e_2$ is a $\gamma_1$ such that $\rho$ has
capability $(1,1)$, which implies that when $e_2$ starts being
evaluated $\rho$ is live and locked.
The input effect of the cell initializer expression $e_1$ is equal
to the output effect of $e_2$ without any occurrences of $\rho$.
% T-AS, T-D
Rules \nrulelabel{T-AS} and \nrulelabel{T-D} check that, before
dereferencing or assigning to cells, a capability of at least
$(1, 1)$ is held.
% T-SH, T-RL, T-LK, T-UL
Rules~\nrulelabel{T-SH}, \nrulelabel{T-RL}, \nrulelabel{T-LK} and
\nrulelabel{T-UL} are the ones that modify cell capabilities.
In each rule, $\kappa$ is the capability after the operation
has been executed.
In the case of \nrulelabel{T-RL}, if the reference count for
a cell is decremented to zero, then all locks must have previously
been released.
% T-IF
The last rule in \figref{fig:typrules}, and probably the least
intuitive, is \nrulelabel{T-IF}.
Suppose $\gamma$ is the input (continuation) effect to a conditional
expression.
Then $\gamma$ is passed as the input effect to both branches.
We know that the outputs of both branches will have $\gamma$ as
a common prefix; if $\gamma_2$ and $\gamma_3$ are the suffixes,
respectively, then $\gamma_2 \tfsep \gamma_3$ is the combined
suffix, which is passed as the input effect to the condition $e_1$.

%============================================
\section{Type Safety} \label{sec:safety}
%============================================
In this section we present proof sketches for the fundamental theorems
that prove type safety of our language.%
\footnote{%
  The complete proofs are given in the
\ifeptcstr
  Appendix.
\else
  companion technical report~\cite{ReglockDeadlock}.
\fi
}
The type safety formulation is based on
proving \emph{progress}, \emph{deadlock freedom} 
and \emph{preservation} lemmata.
Informally, a program written in our language is safe
when for each thread of execution
either an evaluation step can be performed, or
the thread is waiting to acquire a lock (\emph{blocked}).
In addition, there must not exist any threads that have 
reached a deadlocked state. 
As discussed in Section~\ref{sec:operational}, a thread may 
become stuck when
it performs an illegal operation,
or when it references a location that has been deallocated,
or when it accesses a location that has not been locked. 

%For well typed programs, the safety theorem in our system guarantees
%three things:
%\emph{memory safety} (and definite release of memory resources),
%\emph{race freedom} (and definite release of locks), and
%\emph{deadlock freedom}.

\begin{definition}[Thread Typing.]\hskip 1ex
Let $E[e]$ be the body of a thread
%, where $e$ is a the next redex to be evaluated, 
and let $\theta$ be the thread's \emph{access list}.
Thread typing is defined by the rule:
\begin{nruledisplay}
  \trThread
\end{nruledisplay}
\end{definition}
First of all, thread typing implies the typing of $E[e]$.

Secondly, thread typing establishes 
an exact correspondence between
counts of the access list $\theta$ 
and counts of \cfont{pop} expression annotations that
reside in the evaluation context $E[\oeval{\gamma_b}{\opbox}]$
(i.e.,\, $\mathsf{counts\_ok}\linebreak[3]%
  (E[\oeval{\gamma_b}{\opbox}],\theta)$).
The typing derivations
of $e$ and $E$ establish an exact correspondence between
the annotations of \cfont{pop} expressions and static effects.
Therefore, for each location $\imath$ in $\theta$, 
the dynamic reference and lock counts of $\imath$
are identical to the static counts of $\imath$
deduced by the type system. 

Thirdly, thread typing enforces the invariant that
the future lockset of an acquired lock 
at any program point
is \emph{always} a subset of the future lockset 
computed when the lock was initially acquired 
(i.e.,\, $\mfun{lockset\_ok}
               {E[\oeval{\gamma_b}{\opbox}],\theta}$).
This invariant is essential 
for establishing deadlock freedom.
Finally, all locations must be deallocated and released
when a thread terminates
($\forall r^\kappa \in \gamma_1.\, \kappa = (0,0)$). 

\begin{definition}[Process Typing.]\hskip 1ex
A collection of threads $T$ is well typed if each thread
in $T$ is well typed and thread identifiers are distinct:
\begin{nruledisplay}
  \trThreads
\end{nruledisplay}
\end{definition}

\begin{definition}[Store Typing.]\hskip 1ex
A store $S$ is well typed if there is a one-to-one correspondence
between $S$ and $M$ and all stored values are closed and well typed:
\begin{nruledisplay}
  \trStore
\end{nruledisplay}
\end{definition}

\begin{definition}[Configuration Typing.]\hskip 1ex
A configuration $S;T$ is \emph{well typed}
when both $T$ and $S$ are well typed, and
locks are acquired by at most one thread
(i.e.,\ $\mfun{mutex}{T}$ holds).
\begin{nruledisplay}
  \trConfig
\end{nruledisplay}
\end{definition}

\begin{definition}[Deadlocked State.]\hskip 1ex
\label{def:deadlockstate}
A set of threads $n_0, \ldots, n_k$, where $k > 0$,
has reached a \emph{deadlocked state},
when each thread $n_\imath$ has acquired lock
$\ell_{(\imath+1) \, \cfont{mod} \, (k+1)}$ and is waiting for lock
$\ell_\imath$.
\end{definition}

\begin{definition}[Not Stuck.]\hskip 1ex
A configuration $S;T$ is \emph{not stuck} when
each thread in $T$ can take one of the evaluation steps
in \figref{fig:operational}
or it is trying to acquire a lock which
(either itself or its future lockset)
is unavailable
(i.e.,\, $\mfun{blocked}{T,n}$ holds).
\end{definition}

Given these definitions, we can now present the main results
of this paper.
\emph{Progress},
\emph{deadlock freedom} 
and \emph{preservation}
are formalized at the \emph{program} level,
i.e., for all concurrently executed threads.

\begin{lemma}[Deadlock Freedom]\hskip 1ex
  \label{lemma:deadlockfreedom}%\normalfont
  If the initial configuration takes $n$ steps,
  where each step is well typed,
  then the resulting configuration has not 
  reached a deadlocked state.
\end{lemma}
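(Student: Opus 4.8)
The plan is a proof by contradiction that leans on the two invariants built into thread typing. Assume $\mfun{deadlocked}{T_n}$ holds: there are pairwise distinct threads $n_0,\dots,n_k$ with $k>0$ such that, with indices read modulo $k+1$, $T_n$ contains $\oth{n_m}{\theta_m;E_m[\olock{\gamma_m}{\oloc{\imath_m}}]}$ and $\theta_m(\imath_{m+1})\geq(1,1)$ (thread $n_m$ holds lock $\imath_{m+1}$) for every $m\in[0,k]$. From the hypothesis $M_\imath\vdash S_\imath;T_\imath$ for all $\imath\leq n$, I would first record the facts that configuration and thread typing supply: $\mfun{mutex}{T_\imath}$ (each lock is held by at most one thread); $\mathsf{counts\_ok}$, by which the dynamic access lists agree exactly with the \cfont{pop}-annotations, hence with the static effects, along each thread's stack, so that future locksets can be read off the stack; and $\mathsf{lockset\_ok}$ (rule \nrulelabel{DL0}), by which at any reachable configuration the future lockset recomputed for a currently-held lock is contained in the lockset that was recorded for it in $\theta$ when the thread began its current hold by rule \nrulelabel{E-LK0}, intersected with the locations then present in the store.

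For the combinatorial core, for each $m\in[0,k]$ let $t_m\leq n$ be the step at which $n_m$'s current, uninterrupted hold of $\imath_{m+1}$ began; this step is an application either of \nrulelabel{E-LK0} (a $0\to1$ transition of the lock count) or of \nrulelabel{E-NG} (allocation of $\imath_{m+1}$ by $n_m$). Pick $j\in[0,k]$ with $t_j$ maximal. Since every reduction step advances a single thread and $n_{j-1}\neq n_j$, the steps $t_{j-1}$ and $t_j$ are distinct, so $t_{j-1}<t_j$; and $n_{j-1}$ holds $\imath_j$ throughout $[t_{j-1},n]$. Hence $\imath_j\in\mfun{locked}{T_{t_j}}$ and $\imath_j\in\mfun{dom}{S_{t_j}}$.

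Now I would case-split on the rule applied at step $t_j$. If it is \nrulelabel{E-LK0}: in $T_n$, thread $n_j$ is poised at $\olock{\gamma_j}{\oloc{\imath_j}}$ while still holding $\imath_{j+1}$, so in its continuation the lock of $\imath_j$ precedes the matching unlock that returns $\imath_{j+1}$'s lock count to $0$; by the definitions of $\mathsf{lockset}$/$\mathsf{frame\_lockset}$ (which collect exactly the locations locked while the count of the reference in question stays positive) this puts $\imath_j$ into the future lockset currently associated with $\imath_{j+1}$, and then $\mathsf{lockset\_ok}$ together with $\imath_j\in\mfun{dom}{S_{t_j}}$ gives $\imath_j\in\epsilon_j$, the lockset that \nrulelabel{E-LK0} stored for $\imath_{j+1}$ at step $t_j$. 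But that rule's side condition $\mfun{locked}{T'}\cap\epsilon_j=\emptyset$ then forces $\imath_j\notin\mfun{locked}{T_{t_j}}$, contradicting the previous paragraph. If instead the rule at $t_j$ is \nrulelabel{E-NG}: then $\imath_{j+1}$ is freshly allocated by $n_j$ at $t_j$, while $n_{j+1}$, being poised at $\olock{\gamma_{j+1}}{\oloc{\imath_{j+1}}}$, holds a nonzero capability for $\imath_{j+1}$; capabilities for a location are created at its allocation and thereafter only partitioned between a thread and its children at spawn points, so $n_{j+1}$ is a strict descendant of the allocator $n_j$ and was created after $t_j$, giving $t_{j+1}>t_j$ and contradicting the maximality of $t_j$. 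Either way a contradiction results, so no such cycle exists and $\neg\mfun{deadlocked}{T_n}$.

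The step I expect to be the real obstacle is discharging precisely the implication used in the \nrulelabel{E-LK0} case — from ``$n_j$ holds $\imath_{j+1}$ and sits at the redex locking $\imath_j$'' to ``$\imath_j\in\epsilon_j$''. This needs $\mathsf{counts\_ok}$ to line up the dynamic counts in $\theta_j$ with the \cfont{pop}-annotated effects up the stack, a careful reading of the $\mathsf{lockset}$/$\mathsf{frame\_lockset}$ recursion and its pending-unlock counter, and the $\mathsf{lockset\_ok}$ invariant, which is itself available only because the preservation lemma maintains it across every reduction. Pinning down (and, for the full safety theorem, preserving) this dynamic/static correspondence is where the effort goes; the cycle argument itself is then short.
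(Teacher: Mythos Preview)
Your argument is correct and follows essentially the same route as the paper's: pick an extremal thread in the deadlock cycle by the time its currently-held lock was acquired, and derive a contradiction between the side condition $\mfun{locked}{T}\cap\epsilon=\emptyset$ of rule \nrulelabel{E-LK0} and the $\mathsf{lockset\_ok}$ invariant, which forces the lock you are blocked on to belong to the lockset that was recorded at acquisition time. The paper chooses the \emph{first} acquisition (minimal time) and reasons about the next thread along the cycle; you choose the \emph{last} (maximal time) and reason about the previous thread. These are symmetric and lead to the same contradiction.

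Where you differ is in completeness rather than in strategy. You explicitly case-split on whether the extremal acquisition was via \nrulelabel{E-LK0} or via \nrulelabel{E-NG} and dispatch the allocation case by a capability-provenance argument (the waiting thread $n_{j+1}$ must be a post-$t_j$ descendant of the allocator $n_j$, contradicting maximality). The paper's sketch tacitly assumes the acquisition is an \nrulelabel{E-LK0} step. One caveat: your claim that ``this step is an application either of \nrulelabel{E-LK0} or of \nrulelabel{E-NG}'' also needs to rule out receiving a held lock through \nrulelabel{E-SN}; this is ensured by the $\nP{}$ case of the $\mathsf{OK}$ judgement (impure capabilities passed to a spawned thread carry zero lock count, and pure ones are unaliased), but it is worth stating. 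With that addition, your proof is a slightly tidier version of the paper's.
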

\begin{proof}
%%%%%%%%
%BEGIN COMMENT
\mycomment{
The assumptions imply that
   $\emptyset;\oth{0}{e}
    \oparrowts{n} 
    S_n;T_n$ 
	and
	$\forall \imath \in [0,n].
	 \exists M_\imath. M_\imath \vdash S_\imath;T_\imath$.
  Assume that $\mfun{deadlocked}{T_x}$ holds for some $x \in [0,n]$ and
 the first deadlock occuring in the program is in $T_x$ 
	(i.e. $\forall \imath.
				\imath  < x
				\Rightarrow
				\neg\mfun{deadlocked}{T_\imath}$). 
  Then, the following hold:
  \renewcommand{\labelitemi}{-}
  \begin{itemize}
	 \item $T_x=T,\oth{n_0}
		  {\theta_0;E_0[\olock{\gamma_0}
                   {\oloc{\imath_0}}]},
                  \ldots
                \oth{n_z}
                    {\theta_z;
                     E_z[\olock{\gamma_z}{\oloc{\imath_z}}]}$,
where threads $0$ to $z$ are in a deadlocked state.

\item	 $z > 0$ and
	  $\forall m_1 \in [0,z].\  
           %\mfun{locks}
            %   {S,\imath_{\mfun{succ}{m_1}},{m_1}} 
          \mfun{\theta_{m_1}}
               {\imath_{\mfun{succ}{m_1}}} > 0$, 
          where
           %S(\imath_{\mfun{succ}{m_1}}) = ({m_1};n_b;\epsilon_a;\epsilon_b) \land n_b > 0$, where
         $succ(n) = (n + 1) \ \mathrm{mod} (z+1)$.
\end{itemize} 
}
%%%%%%%%%%
% END COMMENT
%%%%%%%%%%%
  Let us assume that $z$ threads have reached a 
  deadlocked state and let
  $m \in [0,z-1]$, 
  $k = (m + 1) \ \mathrm{mod} \, z$ and
  $o = (k + 1) \ \mathrm{mod} \, z$.
  According
  to definition of \emph{deadlocked state}, 
  %~\ref{def:deadlockstate},
  thread $m$ acquires lock $\imath_{k}$ and
  waits for lock $\imath_{m}$, whereas thread
  $k$ acquires lock $\imath_{o}$ and waits for
  lock $\imath_{k}$.
  Assume that $m$ is the first of the $z$
  threads that acquires a lock so it
  acquires lock $\imath_k$, 
  before thread $k$ acquires lock $\imath_{o}$.

  Let us assume that $S_y;T_y$ is 
  the configuration once $\imath_o$ is acquired % 
  by thread ${k}$ for the first time,
  $\epsilon_{1y}$ is the corresponding lockset
  of $\imath_{o}$
  ($\epsilon_{1y} = 
      \flockset
        {%
         \imath_{o}}{1}{
         E[\oeval{\gamma_y}{\opbox}]
        }
  $)
  and 
  $\epsilon_{2y}$ is the set of all heap locations 
  ($\epsilon_{2y}= \mfun{dom}{S_y}$)
  at the time $\imath_{o}$ is acquired.
%  at some step $y$ before $x$ \PG{HERE}
%
%  where $y < x$ such that $\imath_{o}$ is acquired 
%  for the \emph{first} time 
  %
  Then, $\imath_k$ does not belong to 
  $\epsilon_{1y}$, otherwise 
  thread ${k}$ would have been blocked 
  at the lock request of $\imath_{o}$
  as $\imath_k$ is already owned by thread $m$. 

  Let us assume that when thread $k$ attempts to
  acquire $\imath_k$, the configuration is of
  the form  $S_x;T_x$.
  According to the assumption of this lemma that all 
  configurations are well typed
  so $S_x;T_x$ is well-typed as well.
  By inversion of the typing derivation of $S_x;T_x$,
  we obtain the typing derivation  of thread  
   $n_k:\theta_k;E_k[\olock{\gamma_k'}{\oloc{\imath_k}}]$:
   % is 
   %well typed with output effect $\gamma_k$,
   $\olock{\gamma_k'}{\oloc{\imath_k}}$ is well-typed with
   input-output effect 
   $(\gamma_k';\gamma_k'')$, where 
   $\kappa = \gamma_k'(\imath_k @ n')$, $\kappa \geq (1,1)$,  
   $\gamma''=\gamma_k',(\imath_k @ n')^{\kappa - (1,0)}$,
  and 
   $\mfun{lockset\_ok}{E_k[\oeval{\gamma_k''}{\opbox}],\theta_k}$ holds,
   %where $\gamma_k$ is the  output effect assigned to expression
  % $ E_k[\olock{\gamma_k'}{\oloc{\imath_k}}]$
  where $\theta_k$ is the access list of thread $k$.
  %
  %
  %
  %
  %We have assumed that $\imath_{o}$ is locked by 
  %thread $k$ hence 
  $\mfun{lockset\_ok}{E_k[\oeval{\gamma_k''}{\opbox}],\theta_k}$ implies 
   $\flockset{\imath_{o}}{n_2}{E_k[\oeval{\gamma_k''}{\opbox}]} 
     \cap
     \epsilon_1 \subseteq \epsilon_2$, where 
      $\theta_k = \theta_k',
       \imath_{o} \mapsto n_1;n_2;\epsilon_1;\epsilon_2$
  (notice that
         $n_2$ is positive, 
         $\epsilon_2 = \epsilon_{1y}$ and
         $\epsilon_1 = \epsilon_{2y}$ --- 
         this is immediate by the 
         operational steps from $S_y;T_y$ to $S_x;T_x$
         and rule~\nrulelabel{E-LK0}).

    We have assumed that 
    $m$ is the first thread to lock $\imath_k$ 
    at some step before $S_y;T_y$, thus
    $\imath_k \in \mfun{dom}{S_{y}}$ 
    (the store can only grow --- 
      this is immediate by observing the 
     operational semantics rules).
    %
    %
%
   % By inversion of the thread typing derivation
%    of $E_k[\olock{\gamma_k'}{\oloc{\imath_k}}]$,
%    we have 
%    $\gamma_k = \gamma_k'',
%             {\imath_k @ n'
 %            }^{\mfun{\gamma_k''}{\imath_k @ n'} - (0,1)}$
%     for some $\gamma_k''$ and $n'$.
%
%     $\olock{\gamma_k'}{\oloc{\imath_k}}$ 
%     is well-typed with effect
%     $(\gamma_k';\gamma_k',{\imath @ n'}^{\mfun{\gamma_k'}{\imath @ n'} - (0,1)})$, where
%     for some $n'$ and
%
%     Thus,
%	 $M_k;\emptyset;\emptyset	\vdash 
%	   E_k : \tefunc{\tunit}
%                   {\gamma_k'}{\gamma_k''}
%                   {\tunit}  
%                   \teff{\fmin{\gamma_k}}{\gamma_k}$.
%
%     It is trivial to show that 
%
     %
%     by induction on the structure of $E_k$.
%
%
 %Rule~\nrulelabel{W4} implies that
 By the definition of $lockset$ function and the definition
 of $\gamma_k''$ we have that
 $\imath_k \in 
  \flockset{\imath_{o}}{n_2}{E_k[\oeval{\gamma_k''}{\opbox}]}$.
 Therefore, 
 $\imath_k \in 
 \flockset{\imath_{o}}{n_2}{E_k[\oeval{\gamma_k''}{\opbox}]} 
 \cap
 \mfun{dom}{S_y} 
 \subseteq
 \epsilon_{1y}$,
 which is a contradiction.
 %
 %and therefore 
 %${\imath_k} \in \epsilon_{1y}$,
\end{proof}

\begin{lemma}[Progress]\hskip 1ex
  \label{lemma:thread_progress}%\normalfont
  If $S;T$ is a well typed configuration, then
  $S;T$ is not stuck.
\end{lemma}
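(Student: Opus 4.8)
The proof is a standard syntactic progress argument, carried out thread by thread. Given a well-typed configuration $S;T$ --- so by \emph{Configuration Typing} every thread of $T$ is well typed, $M \vdash S$, and $\mfun{mutex}{T}$ holds --- it suffices, by the definition of \emph{not stuck}, to show that for each thread $\oth{n}{\theta;E[e]}$ of $T$ either the whole configuration can step through thread $n$ (leaving the other threads present in the result) or $\mfun{blocked}{T,n}$ holds. The plan is to first establish two routine auxiliary lemmas: a \emph{canonical forms} lemma (a closed well-typed value of type $\tref{\tau}{\loc}$ is some $\oloc{\imath}$ with $(\imath\mapsto\tau')\in M$ and $\tau\simeq\tref{\tau'}{\imath}$; of type $\tunit$ is $\ounit$; of type $\tbool$ is $\otrue$ or $\ofalse$; of a function type is a $\lambda$-abstraction, a $\Lambda$-abstraction, or a $\cfont{fix}$), and a \emph{decomposition} lemma (a closed well-typed expression is either a value or is uniquely $E'[u]$ for a redex $u$ from the grammar of \figref{fig:operational}). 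Composing $E$ with $E'$, it remains to treat two shapes of thread body: $E[v]$ with $v$ a value, and $E[u]$ with $u$ a redex.

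For $E[v]$ with $E=\opbox$: then $v:\tunit$, so $v=\ounit$ by canonical forms; \emph{Thread Typing} forces every final capability to be $(0,0)$, and the $\mathsf{counts\_ok}$ premise then pins $\theta(\imath)=(0,0)$ for all $\imath$, so rule \nrulelabel{E-T} fires. For $E=E'[F]$: inspecting $F$ against the value-accepting reductions (\nrulelabel{E-PP}, \nrulelabel{E-A}, \nrulelabel{E-FX}, \nrulelabel{E-RP}, \nrulelabel{E-NG}, the conditional rules, and the frames for dereference, assignment, $\cfont{share}$, $\cfont{release}$, $\cfont{lock}$, $\cfont{unlock}$) always produces a step, the only non-syntactic obligations being capability side conditions, discharged exactly as in the redex cases below. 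For $E[u]$: the purely computational redexes ($\beta$-reduction, $\cfont{fix}$ unfolding, location application, a conditional on a boolean value, $\cfont{pop}$ of a value, $\cfont{let}$ of a value) reduce unconditionally, and parallel application reduces by \nrulelabel{E-SN} because the capability constraints checked by \nrulelabel{T-A} with $\xi=\nP{}$ guarantee the required split of $\theta$ exists. For the memory redexes $\oderef{\oloc{\imath}}$ and $\oassign{\oloc{\imath}}{v}$, inverting the thread-typing derivation (through \nrulelabel{T-PP} and the evaluation-context typing down to the hole) exposes a static capability on $\imath$ of at least $(1,1)$ at that program point; the correspondence carried by $\mathsf{counts\_ok}$ between static counts and the access list gives $\theta(\imath)\geq(1,1)$; and since thread $n$ then holds a positive lock count on $\imath$, $\mfun{mutex}{T}$ yields $\imath\notin\mfun{locked}{T'}$ for the other threads $T'$, so \nrulelabel{E-D}/\nrulelabel{E-AS} applies ($M\vdash S$ giving $\imath\in\mfun{dom}{S}$). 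The redexes $\oshare{}{\oloc{\imath}}$, $\orelease{}{\oloc{\imath}}$, $\ounlock{}{\oloc{\imath}}$ are similar but need no mutex check: one reads the exact count requirement --- $(1,0)$, $(1,0)$ together with the ``$n_1=1\Rightarrow n_2=0$'' side condition, and $(1,1)$, respectively --- off \nrulelabel{T-SH}, \nrulelabel{T-RL}, \nrulelabel{T-UL} and transports it through $\mathsf{counts\_ok}$.

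The remaining and most delicate case is $\olock{\gamma_1}{\oloc{\imath}}$. Typing via \nrulelabel{T-LK}, transported through the count correspondence, guarantees that thread $n$'s current reference count on $\imath$ is positive. If its lock count on $\imath$ is already positive, i.e.\ $\theta(\imath)\geq(1,1)$, then \nrulelabel{E-LK1} applies. Otherwise $\theta(\imath)=(n_1,0)$ with $n_1\geq 1$; compute the future lockset $\epsilon=\flockset{\imath}{1}{E[\oeval{\gamma_1}{\opbox}]}$, which is a total computation since the stack and $\gamma_1$ are finite. If $\mfun{locked}{T'}\cap\epsilon=\emptyset$ then \nrulelabel{E-LK0} fires (updating $\theta$ with $\mfun{dom}{S}$ and $\epsilon$); otherwise $\mfun{blocked}{T,n}$ holds by definition and the thread is legitimately blocked rather than stuck. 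I expect the main obstacle to be making the static/dynamic count correspondence precise and invoking it --- that \emph{Thread Typing} together with $\mathsf{counts\_ok}$ forces $\theta(\imath)$ to equal the capability the type system attributes to $\imath$ at the hole --- which requires threading inversion through \nrulelabel{T-PP} and the evaluation-context typing judgement while carefully tracking how $\gplus$ merges the continuation effect with the residual effects accumulated on the stack. Everything else is the bookkeeping of a textbook progress proof.
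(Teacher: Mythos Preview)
Your proposal is correct and follows essentially the same route as the paper's proof: work thread by thread, invert configuration and thread typing to obtain $\mfun{mutex}{T}$, $M\vdash S$, and the $\mathsf{counts\_ok}$ correspondence, dispatch the value case via \nrulelabel{E-T}, decompose non-values into $E[u]$, and then case-split on the redex using the static capability bounds transported through $\mathsf{counts\_ok}$ (plus $\mfun{mutex}$ for memory access and the blocked/\nrulelabel{E-LK0}/\nrulelabel{E-LK1} trichotomy for $\cfont{lock}$). The paper states the canonical-forms and decomposition facts more tersely (``trivially shown by induction'') and elaborates the parallel-application case a bit more explicitly, but the argument structure is the same.
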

\begin{proof}
  It suffices to show that for any thread in $T$, a step can
  be performed or $block$ predicate holds for it. 
  Let $n$ be an arbitrary thread in $T$ such that   %
  $T=T_1,\oth{n}{\theta;e}$ for some $T_1$.
  By inversion of the typing derivation of $S;T$ 
  we have that
  $M;\emptyset;\emptyset \vdash_{t} \theta;e : 
   \tunit \teff{\gamma}{\gamma'}$,  
  $\mfun{mutex}{T}$, and $M \vdash S$.

  If $e$ is a \emph{value} then by inversion of
      $\ttypeext{M;\emptyset;\emptyset}
      	        {\theta;e}
	        {\tunit}{\gamma}{\gamma'}$, we obtain
   that 
   $\gamma = \gamma'$, 
   $E[e] = \opbox[\ounit]$
   and
   $\forall \imath. \mfun{\theta}{\imath} = (0,0)$,
   as a consequence of 
    $\forall r^\kappa \in \gamma. \kappa = (0,0)$ and
    $\mfun{counts\_ok}{\opbox[\oeval{\gamma}{\opbox}],\theta}$.
   Thus, rule~\nrulelabel{E-T} can be applied. 

   If $e$ is not a value then it can be trivially
   shown (by induction on the 
   typing derivation of $e$) that
   there exists a redex $u$ and an evaluation context $E$
   such that $e=E[u]$. 
   By inversion of the thread typing derivation for
   $e$ we obtain that
   $M;\emptyset;\emptyset\vdash u : 
     \tau \teff{\gamma_a}{\gamma_b}$,
   $M;\emptyset;\emptyset \vdash E : 
     \tefunc{\tau}{\gamma_a}{\gamma_b}{\tunit} 
     \teff{\gamma}{\gamma'}$,
   $\mfun{counts\_ok}{E[\oeval{\gamma_b}{\opbox}],\theta}$ hold. 
   %and
   %$\mfun{lockset\_ok}{\gamma',\theta}$ hold.

   Then, we proceed by perfoming a case analysis on $u$
   (we only consider the most interesting cases): 
   \casesplit{

   \item  $\oapp{\ofunc{\gamma}{x}{e'}{\tau}}{v}{\nP{}}$:
        it suffices to show that
        $\fsplit{\theta_1}{\theta_2}{\theta}{\fmax{\gamma_c}}$ is defined,
        where $\gamma_c$ is
        the nnotation of type $\tau$.
      %
      %$\gamma_c \vdash \theta = \theta_1 \gplus \theta_2$ holds,
      %where $\gamma_c$ is the annotation of function $v'$.
      %
      If $\fmax{\gamma_c}$ is empty, 
      then the proof is immediate from the base case of
      $\mathsf{split}$ function.
      Otherwise, we must show that for all $\imath$, 
      the count $\theta(\imath)$ is greater than or equal
      to the sum of all 
      $(\imath @ n)^{\kappa}$ in $\fmax{\gamma_c}$.
      This can be shown by considering 
      $\nP{} \vdash 
        \gamma_b = \gamma_a \gplus \gamma_c$ 
      (i.e.,\, the $max$ counts in $\gamma_c$ are less than
           or equal to the $max$ counts in $\gamma_b$), which can be obtained by inversion
   	   of the typing derivation of $\oapp{\ofunc{\gamma}{x}{e'}{\tau}}{v}{\nP{}}$, 
           and the exact correspondence between static ($\gamma_b$) and dynamic counts 
            (i.e,\, % 
                    $\mfun{counts\_ok}{E[\oeval{\gamma_b}{\opbox}],\theta}$).
      Thus, rule~\nrulelabel{E-SN} can be applied 
      to perform a single step.

       \item  $ \oshare{\gamma}{\oloc{\imath}}$: 
	$\mfun{counts\_ok}{E[\oeval{\gamma_b}{\opbox}],\theta}$ 
        establishes an exact correspondence 
        between dynamic and static counts. The typing derivation implies that
	$\gamma_a(\imath @ n_1) \geq (2,0)$, for some $n_1$ existentially bound
	in the premise	of the derivation.
	Therefore, $\theta(\imath) \geq (1,0)$.
	It is possible to perform a single step using rule~\nrulelabel{E-SH}.
              The cases for 
		$\orelease{\gamma}{\oloc{\imath}}$ and
                $ \ounlock{\gamma}{\oloc{\imath}}$ 
              can be shown in a similar manner.

      \item  $ \olock{\gamma_a}{\oloc{\imath}}$: 
        similarly to the case we can show
	that  $\theta(\imath) = (n_1,n_2)$ and $n_1$ is positive.
        If $n_2$ is positive, rule~\nrulelabel{E-LK1} can be applied.
        Otherwise, $n_2$ is zero. Let $\epsilon$ be
        equal to 
        $ \mfun{locked}{T_1} \cap 
          \flockset{\imath}{1}{{E[\oeval{\gamma_a}{\opbox}]}}
        $.
        If $\epsilon$ is empty then rule~\nrulelabel{E-LK0} can be applied
        in order to perform a single step.
        Otherwise, $\mfun{blocked}{T,n}$ predicate holds and the
        configuration is not stuck.

			%
			%Relation $E[\oeval{\gamma_a}{\opbox}];\imath;1 \vdash \epsilon$ is derivable.
			%The intuition behind this proof is that for each $lock$ operation there exists
			%a corresponding $unlock$ operation. This is enforced by the typing rule~\nrulelabel{T-RL}.
			%If $n_2$ is positive then the proof is completed using rule~\nrulelabel{E-LK} to perform
			%a step. 
			%Otherwise, if $\mfun{locked}{T_1} \cap (\epsilon \cup \{\imath\})$ is empty then 
			%rule~\nrulelabel{E-LK0} can be used to perform a step.
                        %
                        %Otherwise, $\mfun{blocked}{T,n}$ holds.
                        %

	\item  $\oderef{\oloc{\imath}}$:
				it can be trivially shown (as in the previous case of $share$
				that we proved $\theta(\imath) \geq (1,0)$), that
				 $\theta(\imath) \geq (1,1)$ and since 
					$\mfun{mutex}{T_1,\oth{n}{\theta;E[\oderef{\oloc{\imath}}]}}$ holds,
				then $\imath \notin \mfun{locked}{T_1}$ and thus rule~\nrulelabel{E-D}
				can be used to perform a step.
                 The case of $\oassign{\oloc{\imath}}{v}$ can be 
                 shown in a similar manner.
                 \qedhere
   }
\end{proof} 

%%%%%%%%%%%%%%%%%%%%%%%%%%%%%%%%%%%%%%%%%%%%%%%%%%%%%%%%%%%%%
%\mylemma{Lemma}{vtypa}{Value Typing}
%%%%%%%%%%%%%%%%%%%%%%%%%%%%%%%%%%%%%%%%%%%%%%%%%%%%%%%%%%%%%
%{%
%  If $\tau' \simeq \tau$ and 
%   $\ttype{M;\Delta;\Gamma}{v}{\tau}{\gamma}{\gamma}$,
%  then 
%   $\ttype{M;\Delta;\Gamma}{v}{\tau'}{\gamma}{\gamma}$
%}
%{
%  Straightforward induction on the typing derivation for values.
%}
%%%%%%%%%%%%%%%%%%%%%%%%%%%%%%%%%%%%%%%%%%%%%%%%%%%%%%%%%%%%%%
%\mylemma{Lemma}{vsuba}{Variable Substitution}
%%%%%%%%%%%%%%%%%%%%%%%%%%%%%%%%%%%%%%%%%%%%%%%%%%%%%%%%%%%%%%
%{
% $\ttype{M;\Delta;\Gamma,x:\tau_1}{e}{\tau_2}{\gamma_1}{\gamma_2}$
% and
% $  \ttype{M;\emptyset;\emptyset}{v}{\tau_1}{\gamma}{\gamma}$,
% then
% $ \ttype{M;\Delta;\Gamma}{e[v/x]}{\tau_2}{\gamma_1}{\gamma_2}$
%}
%{
%  Straightforward induction on the expression typing derivation.
%}
%
\begin{lemma}[Preservation]\hskip 1ex
  \label{lemma:thread_preservation}%\normalfont
  Let $S;T$ be a well-typed configuration with ${M} \vdash {S;T}$.
  If the operational semantics takes a step
  $S;T \oparrowt{n} S';T'$,
  then there exists $M' \supseteq M$
  such that the resulting configuration is well-typed
  with ${M'} \vdash {S';T'}$.
\end{lemma}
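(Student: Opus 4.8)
The plan is to prove subject reduction in the standard way: invert the configuration typing judgement $M \vdash S;T$ (obtaining $M \vdash T$, $M \vdash S$ and $\mfun{mutex}{T}$), then invert the thread typing rule \nrulelabel{EA} for the thread $n$ that is stepped, and finally do a case analysis on the reduction rule used in $S;T \oparrowt{n} S';T'$. Since a step touches only thread $n$ (and may additionally grow $S$, add a fresh thread, or drop a terminated one), I would write $T = T_1,\oth{n}{\theta;E[u]}$ for a redex $u$, invert \nrulelabel{EA} to get $M;\emptyset;\emptyset \vdash u : \tau \teff{\gamma_a}{\gamma_b}$, $M;\emptyset;\emptyset \vdash E : \tefunc{\tau}{\gamma_a}{\gamma_b}{\tunit}\teff{\gamma_1}{\gamma_2}$ together with the side conditions $\forall r^\kappa \in \gamma_1.\,\kappa=(0,0)$, $\mfun{counts\_ok}{E[\oeval{\gamma_b}{\opbox}],\theta}$ and $\mfun{lockset\_ok}{E[\oeval{\gamma_b}{\opbox}],\theta}$, and then reconstruct an \nrulelabel{EA} derivation for the reduct (hence $M' \vdash T'$, and with $M' \vdash S'$ and a re-check of $\mfun{mutex}{}$, $M' \vdash S';T'$). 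In every case except \nrulelabel{E-NG} we take $M' = M$; for \nrulelabel{E-NG} we take $M' = M,\imath\mapsto\tau$ and re-establish store typing with the new entry.

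Before the case analysis I would prove the usual batch of auxiliary lemmas: (i) term- and location-substitution lemmas stating that typing and effects are preserved under $e[v/x]$ and $e[r/\rho]$ (for \nrulelabel{E-A}, \nrulelabel{E-FX}, \nrulelabel{E-RP}, \nrulelabel{E-NG}); (ii) an evaluation-context replacement lemma --- if $\cstda \vdash E : \tefunc{\tau}{\gamma_a}{\gamma_b}{\tau'}\teff{\gamma_1}{\gamma_2}$ and $\cstda \vdash e' : \tau\teff{\gamma_a}{\gamma_b}$ then $E[e']$ has the same outer type and effect; (iii) effect-algebra lemmas about $\gplus$, $\trApd$, $\fmin$, $\fmax$ and $\mathsf{summary}$, notably that $\nL{\gamma}\vdash\gamma'=\gamma\gplus\gamma_b$ makes $\gamma$ a prefix of $\gamma'$ with a controlled suffix (needed for \nrulelabel{E-A}/\nrulelabel{E-PP}/\nrulelabel{E-SN}); and (iv) weakening in $M$. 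With these in hand, the structural cases (\nrulelabel{E-IT}, \nrulelabel{E-IF}, \nrulelabel{E-FX}, \nrulelabel{E-RP}, \nrulelabel{E-AS}, \nrulelabel{E-D}, \nrulelabel{E-SH}, \nrulelabel{E-RL}, \nrulelabel{E-UL}, \nrulelabel{E-LK1}) are routine: the outer effect is unchanged or locally adjusted, $\theta$ is updated in lock-step with the static capability by the matching typing rule, so $\mfun{counts\_ok}{\cdot}$ and $\mfun{lockset\_ok}{\cdot}$ are preserved, and $\mfun{mutex}{T}$ is either untouched or re-checked directly from the rule's own side conditions (e.g.\ $\imath\notin\mfun{locked}{T_1}$ in \nrulelabel{E-AS}/\nrulelabel{E-D}).

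The genuinely interesting cases are \nrulelabel{E-A}/\nrulelabel{E-PP}, \nrulelabel{E-SN} and \nrulelabel{E-NG}. For \nrulelabel{E-A} I would invert \nrulelabel{T-A} and \nrulelabel{T-F}, substitute $[v/x]$ into the body, and type the resulting $\oeval{\gamma_b}{e_1[v/x]}$ by \nrulelabel{T-PP}, checking that the fresh $\cfont{pop}$ frame contributes exactly the counts $\gplus$ added to the outer effect so that $\mfun{counts\_ok}{\cdot}$ is maintained; \nrulelabel{E-PP} is the mirror-image bookkeeping step, discarding the frame when the body has become a value. For \nrulelabel{E-SN} the crux is that $\mathsf{split}$ is total on the arguments arising here --- which follows from the parallel-application restriction baked into \nrulelabel{OK}/\nrulelabel{T-A} (starting capabilities have zero lock count, and the max counts of the argument effect are bounded by those of $\gamma_b$) together with $\mfun{counts\_ok}{\cdot}$ --- and that both $\theta_1$ and $\theta_2$ yield well-typed threads, the child with effect $\fmin{\gamma_a}$; $\mfun{mutex}{T'}$ survives because the split never transfers a positive lock count to the child. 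For \nrulelabel{E-NG}, location substitution for $[\imath@n/\rho]$ and $[\oloc{\imath}/x]$, plus the fresh entry $\imath\mapsto 1;1;\emptyset;\emptyset$ matching the $(1,1)$ capability demanded by \nrulelabel{T-NG}, does the job.

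I expect \nrulelabel{E-LK0} to be the main obstacle, since it is where the deadlock-freedom invariant must be kept alive. After the step the access-list entry for $\imath$ becomes $\imath\mapsto n_1;1;\mfun{dom}{S};\epsilon$ with $\epsilon=\flockset{\imath}{1}{E[\oeval{\gamma_1}{\opbox}]}$; this is exactly the base case of $\mfun{lockset\_ok}{\cdot}$ (the future lockset recorded at acquisition time), so the new entry is fine by construction, and the work is to show the invariant still holds for every previously acquired lock $\imath'$ with stored pair $\epsilon_1';\epsilon_2'$, i.e.\ that its future lockset after the step, intersected with $\epsilon_1'$, is still contained in $\epsilon_2'$. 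Since reducing $\olock{\gamma_1}{\oloc{\imath}}$ to $\ounit$ only shrinks the residual part of $E$, and the static $\mathsf{lockset}$ rules (in agreement with the annotation $\gamma_1$) make $\mathsf{lockset}$ monotone under such shrinking, the inclusion is preserved; I would isolate this monotonicity as a separate lemma and expect it to be the bulk of the proof. The remaining rule \nrulelabel{E-T} is easy: $\forall\imath.\,\theta(\imath)=(0,0)$ forces an empty residual effect, matching $\forall r^\kappa\in\gamma_1.\,\kappa=(0,0)$, so the thread can simply be deleted and the judgements about $S$ and the rest of $T$ carry over unchanged.
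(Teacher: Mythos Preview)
Your proposal is correct and follows essentially the same approach as the paper: invert configuration typing to obtain $M\vdash S$, $M\vdash T$ and $\mfun{mutex}{T}$, invert rule \nrulelabel{EA} for the active thread, and proceed by case analysis on the reduction rule, re-establishing the typing derivation together with the side conditions $\mfun{counts\_ok}{\cdot}$, $\mfun{lockset\_ok}{\cdot}$ and $\mfun{mutex}{\cdot}$. The paper's proof sketch shows only two representative cases (\nrulelabel{E-A} and the group \nrulelabel{E-LK0}/\nrulelabel{E-LK1}/\nrulelabel{E-UL}/\nrulelabel{E-SH}/\nrulelabel{E-RL}), but your handling of these matches: for \nrulelabel{E-A} the paper also inverts \nrulelabel{T-A} and \nrulelabel{T-F}, appeals to a value-substitution lemma, and rebuilds the derivation via \nrulelabel{T-PP}; for the capability-modifying rules the paper also retypes $E$ with the shorter hole effect $\gamma_a$ (your ``evaluation-context replacement'' lemma) and argues that $\mathsf{lockset}$ is monotone when the $\cfont{pop}$ annotation shrinks from $\gamma_b$ to $\gamma_a$, exactly the monotonicity you isolate. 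One small point of alignment worth noting: the annotation $\gamma_1$ on $\olock{\gamma_1}{\oloc{\imath}}$ coincides, by \nrulelabel{T-LK}, with the input effect $\gamma_a$ of the redex, so the $\epsilon$ stored by \nrulelabel{E-LK0} is literally $\flockset{\imath}{1}{E[\oeval{\gamma_a}{\opbox}]}$, which makes the new $\mfun{lockset\_ok}$ entry for $\imath$ hold trivially---this is the ``base case'' you identify, and the paper leaves it implicit.
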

\begin{proof}
 We proceed by case analysis on 
 the thread evaluation relation 
 (we only consider a few cases due to space limitations):
 \casesplit{
%%%%%%%%%%%%%%%%%%%%%%%%%%%%
% CASE [E-A]
%%%%%%%%%%%%%%%%%%%%%%%%%%%%
      \item \nrulelabel{E-A}: % 
	 Rule~\nrulelabel{E-A} implies
	 $S'=S$, 
	$\ \ T'= T,\oth{n}{\theta;E[\oeval{\gamma_a}{e_1[v/x]}]}$ 
				and 
	$e = \oapp{\ofunc{}{x}{e_1}
                {\tfunc{\tau_1}{\gamma_c}{}{\tau_2}}}{v}{\nL{\gamma_a}}$.
	 By inversion of the 
         configuration typing assumption  we have that
         $\mfun{mutex}{T,\oth{n}{\theta;E[e]}}$
        and
	$\ttypeext{M;\emptyset;\emptyset}
                  {\theta;E[e]}
         	  {\tunit}{\gamma}{\gamma'}$ hold.
        It suffices to show that
        $\mfun{mutex}{T,
	             \oth{n}{\theta;
                     E[\oeval{\gamma_a}{e_1[v/x]}]}}$
        and
        %$\tholds{M}{T,\oth{n}{\theta;
        %        E[\oeval{\gamma_a}{e_1[v/x]}]}}$ hold.
	$\ttypeext{M;\emptyset;\emptyset}
                  {\theta;E[\oeval{\gamma_a}{e_1[v/x]}]}
         	  {\tunit}{\gamma}{\gamma'}$ hold.
        The former is immediate from 
        $\mfun{mutex}{T,\oth{n}{\theta;E[e]}}$ as
	          no new locks are acquired. 
        Now we proceed with the latter,
        which can be shown by proving that
        $ \ttype{M;\emptyset;\emptyset}
                    {\oeval{\gamma_a}{e_1[v/x]}}
                   {\tau_2'}{\gamma_a}{\gamma_b}$ holds.
        By inversion on the thread typing derivation 
        $E[e]$ we have 
        %      
	%$M;\emptyset;\emptyset 
	% \vdash  e : \tau_2'
        % \teff{\gamma_a}{\gamma_b}$, which in turn implies
        %
         %%
	$M;\emptyset;\emptyset \vdash v  : \tau_1'
         \teff{\gamma_b}{\gamma_b}$, 
        %
        %  $\nL{\gamma_a} \vdash \gamma_c'$,
         $\nL{\gamma_a} \vdash 
            \gamma_b = \gamma_a \gplus \gamma_c'$ and
   	 $M;\emptyset;\emptyset \vdash 
           \ofunc{}{x}{e_1}
                  {\tfunc{\tau_1}{\gamma_c}{???}{\tau_2}}
            :
            \tfunc{\tau_1'}{\gamma_c'}{???}{\tau_2'} 
            \teff{\gamma_b}{\gamma_b}$, 
          where 
          $\tfunc{\tau_1'}{\gamma_c'}{???}{\tau_2'}
          \simeq 
          \tfunc{\tau_1}{\gamma_c}{???}{\tau_2}$.
          %%%%
          %%%%
         %
         %
         We can use proof by induction on 
         the expression typing relation
         to show that if $v$ is well typed with $\tau_1'$, then
         it is also well typed with $\tau_1$ provided that
         %Lemma~\ref{vtypa} and 
         $\tau_1 \simeq \tau_1'$. Therefore, 
         %imply that
 	     $M;\emptyset;\emptyset \vdash v  : \tau_1
                \teff{\gamma_b}{\gamma_b}$ holds.
         By inversion of the function typing derivation 
         we obtain that
         $\nL{\emptyset} \vdash \gamma_c
          \Rightarrow
           \ttype{M;\emptyset;\emptyset,x:\tau_1}
                {e_1}{\tau_2}{\fmin{\gamma_c}}{\gamma_c}$.
             $\nL{\emptyset} \vdash \gamma_c'$ 
         (premise of 
            $\nL{\gamma_a} \vdash 
            \gamma_b = \gamma_a \gplus \gamma_c'$)
         and $\gamma_c \simeq \gamma_c'$
         imply that
          $\nL{\emptyset} \vdash \gamma_c$ holds, thus
         $   \ttype{M;\emptyset;\emptyset,x:\tau_1}
                   {e_1}{\tau_2}{\fmin{\gamma_c}}{\gamma_c}$ 
         holds.
         %
         %
         %                     %
         By applying the standard value substitution lemma on
         the new typing derivation of $v$ we obtain that
         $\ttype{M;\emptyset;\emptyset}
                   {e_1[v/x]}{\tau_2}
                   {\fmin{\gamma_c}}{\gamma_c}$ holds.
            The application of rule~\nrulelabel{T-PP} 
            implies that
            $ \ttype{M;\emptyset;\emptyset}
              {\oeval{\gamma_a}{e_1[v/x]}}
             {\tau_2'}{\gamma_a}{\gamma_b}$ holds.
%            ($\vdash M;\emptyset;\emptyset;\gamma_a;\gamma_b$ 
%              can be derived by the application
%             of Lemma~\ref{wf} to the typing derivation of $e$).

%%%%%%%%%%%%%%%%%%%%%%%%%%%%%%%%%%%%%%%%%
% A Bunch of rules with SIDE EFFECTS
%%%%%%%%%%%%%%%%%%%%%%%%%%%%%%%%%%%%%%%%%
\item \nrulelabel{E-LK0}, \nrulelabel{E-LK1}, \nrulelabel{E-UL},
       \nrulelabel{E-SH} and \nrulelabel{E-RL}:
these rules generate side-effects as they modify 
the reference/lock count of location $\imath$.
We provide a single proof for all cases. 
Hence, we are assuming 
here that $u$ (i.e.\, in $E[u]$) has one of the following
forms:
         $\olock{\gamma_1}{\oloc{\imath}}$,
         $\ounlock{\gamma_1}{\oloc{\imath}}$
         $\oshare{\gamma_1}{\oloc{\imath}}$ or 
         $\orelease{\gamma_1}{\oloc{\imath}}$.
Rules~\nrulelabel{E-LK0}, \nrulelabel{E-LK1},
      \nrulelabel{E-UL}, \nrulelabel{E-SH} 
      and \nrulelabel{E-RL} imply that
$S'=S$, 
$\ \ T'= T,\oth{n}{\theta';E[\ounit]}$, 
where $\ounit$ replaces $u$ in context $E$
and $\theta$ differs with respect to $\theta'$
only in the one of the counts of $\imath$
(i.e.,\,
$\theta' = 
 \theta[\imath \mapsto
 \mfun{\theta}{\imath}+(n_1,n_2)]$ and
 $\gamma_a(r) - \kappa = (n_1,n_2)$ --- $\gamma_a$ is the 
 input effect of $E[u]$). 

By inversion of the configuration typing assumption  we have that:
\listitems{
\item $\mfun{mutex}{T,\oth{n}{\theta;E[u]}}$: 
In the case of
\nrulelabel{E-UL},
\nrulelabel{E-SH}, \nrulelabel{E-LK1} 
and \nrulelabel{E-RL} no new locks are acquired. 
Thus, $\mfun{mutex}{T,\oth{n}{\theta';E[\ounit]}}$ holds.
In the case of rule~\nrulelabel{E-LK0}, 
a new lock $\imath$ is acquired
(i.e.,\, when the lock count of $\imath$ is zero) 
the precondition 
of~\nrulelabel{E-LK0} suggests that 
no other thread holds $\imath$:
$\mfun{locked}{T} \cap
 \flockset{\imath}{1}
      {{E[\oeval{\gamma_a}{\opbox}]}} 
  = \emptyset$. % 
   %(notice that $\imath$ is always 
   % included in the $lockset$ by rules~\nrulelabel{W0,W1}).
%
Thus,
$\mfun{mutex}{T,\oth{n}{\theta';E[\ounit]}}$ holds.

  \item	$\ttypeext{M;\emptyset;\emptyset}
                           {\theta;E[u]}{\tunit}{\gamma}{\gamma'}$:
      By inversion % 
      we have that
       $M;\emptyset;\emptyset	\vdash 
	E : \tefunc{\tunit}
             {\gamma_a}{\gamma_b}
              {\tunit} 
              \teff{\gamma}{\gamma'}$
      and
     $M;\emptyset;\emptyset 
     \vdash 
      u : \tunit \teff{\gamma_a}{\gamma_b}$,
      where
       $\gamma_b = \gamma_a,(\imath @ n')^\kappa$ for
       some $n'$.
      It can be trivially shown from the latter derivation
      that
      $M;\emptyset;\emptyset \vdash \ounit : \tunit
         \teff{\gamma_a}{\gamma_a}$. 
      We can obtain from the typing derivation of $E$ 
      (proof by induction) that
      $M;\emptyset;\emptyset	\vdash 
	E : \tefunc{\tunit}
             {\gamma_a}{\gamma_a}
             {\tunit}  
            \teff{\gamma}{\gamma''}$,
       where
       $\gamma' = \gamma'', (\imath @ n')^\kappa$.
         %
         %
         %
      %   Lemma~\ref{evalcomp} implies that 
      %	 $M;\emptyset;\emptyset	\vdash 
      %	  E[\ounit] : \tunit 
      %     \teff{\gamma'}
      %                {\gamma''}$.

   \item $\mfun{lockset\_ok}{E[\oeval{\gamma_b}{\opbox}],\theta}$ 
         and $\mfun{counts\_ok}{E[\oeval{\gamma_b}{\opbox}],\theta}$:
      %GOAL: $\mfun{ok}{\gamma'',\theta'}$
      %
      By the definition of $lockset$ function it can be shown that
      $\flockset{\jjmath}{n_b}{E[\oeval{\gamma_a}{\opbox}]}
       \subseteq
       \flockset{\jjmath}{n_b}{E[\oeval{\gamma_b}{\opbox}]}$
      for all $\jjmath \neq \imath$ in the domain of
      $\theta'$ ($n_b$ is the lock count of $\jjmath$
      in $\theta$).
      The same applies for $\jjmath = \imath$ in the case
      of rules~\nrulelabel{E-SH}, \nrulelabel{E-RL} as the lock count
      of $\imath$ is not affected. 
      In the case of rules~\nrulelabel{E-LK0}, \nrulelabel{E-LK1},
      \nrulelabel{E-UL} we have
       $\flockset{\imath}{n_b \pm 1}{E[\oeval{\gamma_a}{\opbox}]}$, but this
      is identical to % 
       $\flockset{\imath}{n_b}{E[\oeval{\gamma_b}{\opbox}]}$ 
      by the definition of $lockset$. 
      Therefore $\mfun{lockset\_ok}{E[\oeval{\gamma_a}{\opbox}],\theta'}$ holds.
      The predicate 
      $counts\_ok$ $(E[\oeval{\gamma_b}{\opbox}],\theta)$
      enforces the invariant that
      the static counts 
      are identical to the dynamic 
         counts ($\theta$) of $\imath$. 
      The lock count of $\theta$ is modified by $\pm 1$
      and $\gamma_a$ differs with respect to $\gamma_b$
      by $(\imath @ n')^\kappa$. We can use this fact
      to show that $\mfun{counts\_ok}{E[\oeval{\gamma_a}{\opbox}],\theta'}$.
      \qedhere
%
%To complete the proof it suffices to show that $\theta'$ is equal to
%$\theta'$. This is immediate as in all cases the \emph{same} count
% is incremented/decremented by one both statically and dynamically.
%
}
}
\end{proof} 

\begin{lemma}[Multi-step Program Preservation]\hskip 1ex
  \label{lemma:mthread_preservation}%\normalfont
  Let $S_0;T_0$ be a \emph{closed well-typed configuration} 
  for some $M_0$ and assume that
  $S_0;T_0$ evaluates to $S_n;T_n$ in $n$ steps.
  Then for all $\imath \in [0,n]$
  $M_\imath \vdash S_\imath;T_\imath$ holds.
\end{lemma}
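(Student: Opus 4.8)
The plan is to argue by induction on the number of steps $n$, with the single-step Preservation lemma (Lemma~\ref{lemma:thread_preservation}) doing all the real work. For the base case $n = 0$ there is nothing to show: the only index is $\imath = 0$, and $M_0 \vdash S_0;T_0$ holds by hypothesis.

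For the inductive step I would assume the statement for runs of length $n-1$ and take a run $S_0;T_0 \oparrowts{n} S_n;T_n$. By inversion on the multi-step relation (rule \nrulelabel{E-M1}) this splits as $S_0;T_0 \oparrowts{n-1} S_{n-1};T_{n-1}$ followed by a single reduction $S_{n-1};T_{n-1} \oparrowt{m} S_n;T_n$ for some thread identifier $m$. The induction hypothesis, applied to the length-$(n{-}1)$ prefix, supplies memory typings $M_0, \ldots, M_{n-1}$ with $M_\imath \vdash S_\imath;T_\imath$ for every $\imath \in [0,n-1]$; in particular $M_{n-1} \vdash S_{n-1};T_{n-1}$.

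Next I would feed $M_{n-1} \vdash S_{n-1};T_{n-1}$ together with the final step $S_{n-1};T_{n-1} \oparrowt{m} S_n;T_n$ into Lemma~\ref{lemma:thread_preservation}, obtaining some $M_n \supseteq M_{n-1}$ with $M_n \vdash S_n;T_n$. Appending this $M_n$ to the family already constructed yields $M_\imath \vdash S_\imath;T_\imath$ for all $\imath \in [0,n]$, which is precisely the claim.

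There is no genuine difficulty here --- the argument is a bare induction --- but the one point worth stating carefully is the bookkeeping on memory typings: each application of Preservation only enlarges the typing ($M_{\imath+1} \supseteq M_\imath$), so the chain $M_0 \subseteq M_1 \subseteq \cdots \subseteq M_n$ is monotone and the witnesses obtained at successive steps are mutually compatible. All of the substance of the argument lives in Lemma~\ref{lemma:thread_preservation}; this lemma merely packages it for a whole execution.
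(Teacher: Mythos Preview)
Your proposal is correct and follows exactly the approach of the paper, which simply states ``Proof by induction on the number of steps $n$ using Lemma~\ref{lemma:thread_preservation}.'' You have merely spelled out the base case, the inversion on \nrulelabel{E-M1}, and the bookkeeping on the chain $M_0 \subseteq \cdots \subseteq M_n$ that the paper leaves implicit.
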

\begin{proof}
 Proof by induction on the number of steps $n$ using
 Lemma~\ref{lemma:thread_preservation}.
\end{proof}

\begin{theorem}[Type Safety]\hskip 1ex
  \label{theorem:type_safety}%\normalfont
Let expression $e$ be the initial program
and let the initial typing context $M_0$
and the initial program configuration $S_0; T_0$ be defined as follows:
  $M_0      = \emptyset$,
  $S_0      = \emptyset$, and
  $T_0      = \{ 0 :\emptyset;e \}$.
  If $S_0;T_0$ is well-typed in $M_0$ and
  the operational semantics takes any number of steps
  $S_0;T_0 \oparrowt{n}^{n} S_n;T_n$,
  then the resulting configuration $S_n;T_n$ is not stuck
  and $T_n$ has not reached a deadlocked state.
\end{theorem}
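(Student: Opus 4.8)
The plan is to derive Type Safety as a straightforward corollary of the three main lemmata already established: \emph{Preservation} (Lemma~\ref{lemma:thread_preservation}), its multi-step extension (Lemma~\ref{lemma:mthread_preservation}), \emph{Progress} (Lemma~\ref{lemma:thread_progress}), and \emph{Deadlock Freedom} (Lemma~\ref{lemma:deadlockfreedom}). First I would observe that the initial configuration $S_0;T_0 = \emptyset; \{0:\emptyset;e\}$ is well-typed in $M_0 = \emptyset$ by hypothesis; this is the base case. The single-threaded body $e$ has an empty access list, so $\mfun{counts\_ok}{\opbox[\oeval{\gamma_b}{\opbox}],\emptyset}$ and $\mfun{lockset\_ok}{\opbox[\oeval{\gamma_b}{\opbox}],\emptyset}$ hold trivially, and $\mfun{mutex}{T_0}$ holds since there is only one thread; the store $S_0 = \emptyset$ is vacuously well-typed.

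Next I would apply Lemma~\ref{lemma:mthread_preservation} to the reduction sequence $S_0;T_0 \oparrowt{n}^{n} S_n;T_n$: this yields, for every $\imath \in [0,n]$, a typing context $M_\imath$ with $M_\imath \vdash S_\imath;T_\imath$. In particular $M_n \vdash S_n;T_n$, so the final configuration is well-typed. From $M_n \vdash S_n;T_n$, Lemma~\ref{lemma:thread_progress} (Progress) gives $\vdash S_n;T_n$, i.e.\ $S_n;T_n$ is not stuck: each thread can take an evaluation step or is blocked waiting on a lock. Finally, since every intermediate configuration $S_\imath;T_\imath$ is well-typed (each step is well typed), the hypotheses of Lemma~\ref{lemma:deadlockfreedom} are met for the sequence $\emptyset;\oth{0}{e} \oparrowt{n}^{n} S_n;T_n$, so $\neg\mfun{deadlocked}{T_n}$. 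Combining these two conclusions gives exactly the statement: $S_n;T_n$ is not stuck and $T_n$ has not reached a deadlocked state.

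Since all the real work is encapsulated in the preceding lemmata, there is essentially no obstacle in the proof of the theorem itself; it is pure plumbing. The one point needing a little care is checking that the well-typedness of the \emph{initial} configuration in the sense of $M_0 \vdash S_0;T_0$ (configuration typing, including $\mfun{mutex}$ and the thread-typing side conditions $\mfun{counts\_ok}$, $\mfun{lockset\_ok}$, and $\forall r^\kappa \in \gamma_1.\,\kappa=(0,0)$) actually follows from the bare hypothesis ``$S_0;T_0$ is well-typed in $M_0$'' — but this is immediate because the single initial thread has an empty access list and empty evaluation context, making every side condition vacuous, and there are no locations yet allocated. I would therefore present the proof as a short three-sentence argument: instantiate Lemma~\ref{lemma:mthread_preservation} to get well-typedness all along the run, feed the endpoint to Lemma~\ref{lemma:thread_progress} for ``not stuck,'' and feed the whole run to Lemma~\ref{lemma:deadlockfreedom} for ``not deadlocked.''
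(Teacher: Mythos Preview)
Your proposal is correct and follows essentially the same route as the paper: apply Lemma~\ref{lemma:mthread_preservation} to obtain $M_\imath \vdash S_\imath;T_\imath$ for all $\imath\in[0,n]$, then invoke Lemma~\ref{lemma:thread_progress} for ``not stuck'' and Lemma~\ref{lemma:deadlockfreedom} for ``not deadlocked.'' Your additional discussion verifying the side conditions ($\mathsf{counts\_ok}$, $\mathsf{lockset\_ok}$, $\mathsf{mutex}$) for the initial configuration is more than the paper spells out, but since well-typedness of $S_0;T_0$ in $M_0$ is already assumed as a hypothesis, that elaboration is not strictly required.
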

\begin{proof}
  The application of Lemma~\ref{lemma:mthread_preservation} 
  to the typing derivation of $S_0;T_0$ implies 
  that for all steps from zero to $n$
  there exists an $M_\imath$ such that
  $M_\imath \vdash S_\imath;T_\imath$.
  Therefore, Lemma~\ref{lemma:deadlockfreedom} 
  implies that $\neg\mfun{deadlocked}{T_{n}}$ and
  Lemma~\ref{lemma:thread_progress} implies
  $S_{n};T_{n}$ is not stuck.
\end{proof}

Typing the initial configuration $S_0;T_0$
with the empty typing context $M_0$ guarantees that
all functions in the program are closed 
and that no explicit location values ($\oloc{\imath}$)
are used in the original program.

\section{Concluding Remarks} \label{sec:conc}
%============================================
%In this paper, we have presented a location-polymorphic
%lambda calculus %and a formalism for this calculus
%that guarantees absence of deadlocks
%without enforcing a total order on lock acquisition.
%
The main contribution of this work is
type-based deadlock avoidance for a
language with unstructured locking primitives
and the meta-theory for the proposed semantics.
The type system presented in this paper guarantees
that well-typed programs 
will not deadlock at execution time.
This is possible by statically verifying that
program annotations 
reflect the order of future lock operations and
using the annotations  
at execution time to avoid deadlocks.
The main advantage over purely static approaches
to deadlock freedom is that our type system 
accepts a wider class of programs as it does not
enforce a total order on lock acquisition. 
The main disadvantages of our approach is that
it imposes an additional run-time 
overhead induced 
by the future lockset computation 
and blocking time (i.e.,\, both the requested lock
and its future lockset must be available). 
Additionally, in some cases threads may unnecessarily 
block because our type and effect system is conservative.
For example, when a thread locks $x$ and executes a
lengthy computation (without acquiring other locks)
before releasing $x$, it would be safe to allow another
thread to lock $y$ even if $x$ is in its future lockset.

%run-time lock
%validates
%annotated programs and guarantees that
%a formal semantics for a l
%
%and a type system that guarantees 
%\emph{deadlock freedom}.

%
We have shown that this is a non-trivial extension
for existing type systems based on deadlock avoidance.
There are three significant sources of complexity:
(i)~lock acquisition and release operations may not be
    properly nested,
(ii)~lock-unlock pairs may span multiple contexts:
     function calls that contain lock operations
     may not always \emph{increase}  the size of lockset, but
     instead \emph{limit} the lockset size.
     In addition, future locksets must be computed
     in a context-sensitive manner 
     (stack traversal in our case),
 and
(iii)~in the presence of 
      location (lock) polymorphism and aliasing,
      it is very difficult for a static type system 
      even to detect the previous 
      two sources of complexity.
      To address lock aliasing without 
      imposing restrictions statically, 
      we defer lockset resolution until run-time.

\begin{small}
\section*{Acknowledgement}
%=========================
This research is partially funded by the programme for supporting
basic research ($\Pi$EBE 2010) of the National Technical University of
Athens, under a project titled ``Safety properties for concurrent
programming languages.''
\end{small}

%\newpage
\begin{appendix}
  \section*{Appendix}
  \setcounter{section}{1}
  \trTable
\ifeptcstr
 \input{appendix_listing}
\fi
\end{appendix}

\end{document}